\pgfplotsset{compat=newest} 
\pgfplotsset{plot coordinates/math parser=false}
\newtheorem{theorem}{Theorem}
\newtheorem{lemma}[theorem]{Lemma}
\newenvironment{proof}[1][Proof]{{\em #1:} }{\ \rule{0.5em}{0.5em}}
\newcommand{\set}[1]{\mathcal{#1}} 
\newcommand{\ul}[1]{\underline{#1}}
\newcommand{\E}[1]{{\rm E}\left[{#1}\right]}
\def\tg{\tilde{g}}
\def\tvol{\text{Vol}}
\def\tI{\tilde{I}}
\def\cP{\mathcal{P}}
\def\tC{\tilde{C}}
\def\tX{\tilde{X}}
\def\tx{\tilde{x}}
\def\re{\text{Re}}
\def\sinc{\text{sinc}}
\begin{document}
\title{Capacity Bounds for Discrete-Time, Amplitude-Constrained,
Additive White Gaussian Noise Channels}

\author{Andrew~Thangaraj,~Gerhard~Kramer~and~Georg~B\"ocherer
\thanks{A. Thangaraj is with the Department of Electrical Engineering,
  Indian Institute of Technology Madras,
Chennai 600036, India, Email: andrew@ee.iitm.ac.in}
\thanks{G. Kramer and G. B\"ocherer are with the Department of Electrical and Computer Engineering,
Technical University of Munich,
Arcisstra{\ss}e 21, 80333 M\"unchen, Germany, Email:
gerhard.kramer@tum.de, georg.boecherer@tum.de}
\thanks{This paper was presented in part at the IEEE International
  Symposium on Information Theory (ISIT) 2015, Hong Kong.}}

%=============================================================

\maketitle

\begin{abstract}
The capacity-achieving input distribution of the discrete-time, additive white Gaussian noise
(AWGN) channel with an amplitude constraint is discrete and seems difficult
to characterize explicitly. A dual capacity expression is used to derive 
analytic capacity upper bounds for scalar and vector AWGN
channels. The scalar bound improves on McKellips' bound
and is within 0.1 bits of capacity for all signal-to-noise ratios (SNRs).
The two-dimensional bound is within 0.15 bits of capacity provably up to 4.5
dB, and numerical evidence suggests a similar gap for all
SNRs. 
\end{abstract}

\begin{keywords}
additive white Gaussian noise channel, amplitude constraint, capacity
\end{keywords}

%=============================================================
%=============================================================
\section{Introduction}
\label{sec:intro}
The most commonly-studied channel model for communications is the additive white
Gaussian noise (AWGN) channel. The AWGN model is interesting only with
constraints on the channel input or output. Depending on the application, one
is interested in limiting, e.g., the average input (or output) variance or the
input amplitude.% \cite{Shannon48}.

Input (or output) variance constraints result in elegant analytic capacity
expressions such as Shannon's $\frac{1}{2}\log(1+\text{SNR})$ formula. The
amplitude constraint seems less tractable, and typical analyses use Smith's
methods~\cite{Smith1971} to show that the capacity-achieving input distribution
has discrete amplitudes, see \cite{Shamai95,Tchamkerten04,Chan05,Sharma08}
and references therein. A recent line of work studies the peak-to-average power
(PAPR) ratio of good codes~\cite{PolyanskiyWu14}.

An alternative approach is by McKellips~\cite{McKellips:04isit} who develops
analytic and tight capacity upper bounds by bounding the channel output entropy.
We instead use the dual capacity expression in~\cite[p.~128]{Csiszar11}
(see also~\cite[Eqn.~(7)]{Lapidoth:03}) and study both scalar and vector channels.
The dual approach was also used in \cite{LMW2009} for scalar AWGN channels
with non-negative channel inputs.  Our models and results differ from those in \cite{LMW2009}:\
we do not impose a non-negativity constraint (this difference turns out to be minor for the
scalar case), we study vector channels that include the important practical case of
two-dimensional (complex) AWGN channels, and we develop certain bounds in more detail.
Our bounds are within 0.15 bits of capacity provably up to 4.5 dB, and numerically for all SNRs
for two-dimensional (complex) AWGN channels.

This paper is organized as follows.
Section~\ref{sec:prelim} presents functions, integrals, and bounds that we need later.
Sections~\ref{sec:1D}-\ref{sec:nD} develop the one-, two-, and $n$-dimensional
bounds, as well as two refinements. The appendices contain technical proofs.

%=============================================================
%=============================================================
\section{Preliminaries}
\label{sec:prelim}
Consider the following functions:
\begin{align*}
& \psi(x) \overset{(a)}{=} \frac{1}{\sqrt{2\pi}} e^{-x^2/2} \\
& Q(x) \overset{(b)}{=} \int_{x}^{\infty} \psi(z) \: dz \\
& I_{0}(x) \overset{(c)}{=} \frac{1}{\pi} \int_{0}^{\pi} e^{x \cos \phi} \: d\phi \\
& Q_{1}(a,b) \overset{(d)}{=} \int_{b}^{\infty} z \: e^{-(z^2+a^2)/2} I_{0}(az) \: dz \\
%& I_{\mu}(x) \overset{(c)}{=} \frac{1}{\pi} \int_{0}^{\pi} e^{x \cos \phi} \cos(\mu \phi) \: d\phi \\
%& Q_{\mu}(a,b) \overset{(d)}{=} \frac{1}{a^{\mu-1}} \int_{b}^{\infty} z^{\mu} \: e^{-(z^2+a^2)/2} I_{\mu-1}(az) \: dz \\
%& Q_{\mu,\nu}(a,b) \overset{(e)}{=} \int_{b}^{\infty} z^{\mu} \: e^{-(z^2+a^2)/2} I_{\nu}(az) \: dz \\
& D(p\|q) \overset{(e)}{=} \int_{\infty}^{\infty} p(z) \log \frac{p(z)}{q(z)} \: dz
\end{align*}
where $(a)$ is the Gaussian density, $(b)$ is the Q-function,
$(c)$ is the modified Bessel function of the first kind of integer order $0$, 
$(d)$ is the Marcum Q-function,
and $(e)$ is the informational divergence between the densities $p$
and $q$. Logarithms to the base $e$ and base 2 are denoted as $\log$ and $\log_2$, respectively. 
A few useful properties are $Q_{1}(a,0)=1$ and the bounds
\begin{align}
& \frac{x}{1+x^2} \, \psi(x) < Q(x) < \frac{1}{x} \, \psi(x) \label{eq:Q-bounds}
\end{align}
for $x>0$ (and for $x=0$). We also consider the integrals:
\begin{align}
& \int_x^{\infty} y \, \psi(y) \, dy = \left. -\psi(y)\right|_x^{\infty} = \psi(x) \label{eq:psi-integral-1} \\
& \int_x^{\infty} y^2 \, \psi(y) \, dy = \int_x^{\infty} y (-d\psi(y)) = x \psi(x) + Q(x) \label{eq:psi-integral-2} \\
& \int_x^{\infty} y^3 \, \psi(y) \, dy = \int_x^{\infty} y^2 (-d\psi(y)) = (x^2 + 2) \psi(x) . \label{eq:psi-integral-3} 
\end{align}
For sequences $f(n)$ and $g(n)$, the standard big-$O$ notation
$f(n)=O(g(n))$ denotes that $|f(n)|\le c|g(n)|$ for a constant $c$ and
sufficiently large $n$ \cite{Cormen:2001:IA:580470}. 
Finally, a useful upper bound on the capacity $C$ of a memoryless channel $p_{Y|X}(\cdot)$
is based on the dual capacity expression \cite[p.~128]{Csiszar11}, \cite[Eqn.~(7)]{Lapidoth:03}.
The bound is
\begin{align} \label{eq:upper-bound}
   C \le \max_{x \in \set{S}} \: D\left( \, p_{Y|X}(\cdot | x) \, \| \, q_Y(\cdot) \, \right)
\end{align}
where $q_{Y}(\cdot)$ is any choice of ``test" density $q_{Y}(\cdot)$ and $\set{S}$
is the set of permitted $x$. 

%=============================================================
\section{Real AWGN Channel}
\label{sec:1D}
Consider the real-alphabet AWGN channel
\begin{align}
   Y = X + Z
\end{align}
where $Z$ is a Gaussian random variable with mean $0$ and variance $1$.
The channel density is
\begin{align}
   p_{Y|X}(y|x) = \psi(y-x).
\end{align}

Consider the amplitude constraint $|X| \le A$ where $A>0$.
We choose a family of test densities
\begin{align} \label{eq:test-density-1D}
   q_{Y}(y) = \left\{
   \begin{array}{ll}
      \frac{\beta}{2A}, & |y| \le A \\
      \frac{1-\beta}{\sqrt{2\pi}} e^{-(y-A)^2/2}, & |y|>A
   \end{array} \right.
\end{align}
where $\beta\in[0,1]$ is a parameter to be optimized.
The test density is illustrated in Fig. \ref{fig:testreal}.
It is a mixture of two distributions,
a uniform distribution in the interval $|y|\le A$ and
a ``split and scaled" Gaussian density for $|y|> A$. The parameter
$\beta$ specifies the mixing proportion.

%%%%%%%%%%%%%%%%%%%
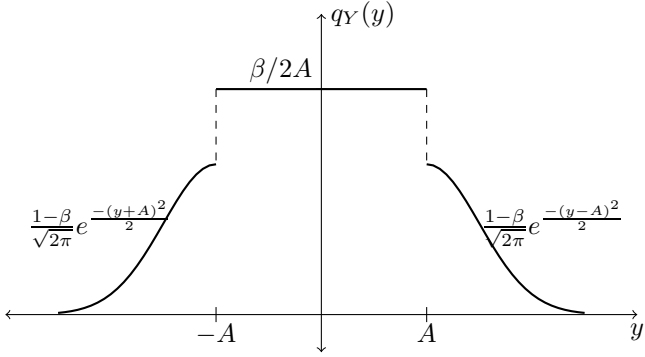
\begin{figure}[t!]
  \centering
  \begin{tikzpicture}[xscale=0.7,yscale=1]
\draw[<->] (-6,0) -- (6,0);
\draw[<->] (0,4) -- (0,-0.5);
\draw (-2,0.1) -- (-2,-0.1);
\draw (2,0.1) -- (2,-0.1);
\draw (0.1,3) -- (-0.1,3);
\node[left] at (0,3.25) {$\beta/2A$};
\node[below] at (2,0) {$A$};
\node[below] at (-2,0) {$-A$};
\node[right] at (0,4) {$q_Y(y)$};
\node[below] at (6,0) {$y$};
\draw[domain=-2:2,thick] plot (\x,3);
\draw[dashed] (-2,3) -- (-2,2);
\draw[dashed] (2,3) -- (2,2);
\draw[domain=-5:-2,thick] plot(\x,{2*exp(-(\x+2)*(\x+2)/2)});
\draw[domain=2:5,thick] plot(\x,{2*exp(-(\x-2)*(\x-2)/2)});
\node at (-4.2,1.2) {$\frac{1-\beta}{\sqrt{2\pi}}e^{\frac{-(y+A)^2}{2}}$};
\node at (4.4,1.2)
{$\frac{1-\beta}{\sqrt{2\pi}}e^{\frac{-(y-A)^2}{2}}$};
\end{tikzpicture}
  \caption{Test densities for the real AWGN channel.}
  \label{fig:testreal}
\end{figure}
%%%%%%%%%%%%%%%%%%%%

Inserting \eqref{eq:test-density-1D} into the divergence in \eqref{eq:upper-bound}, we have
\begin{align} 
   & D\left( p_{Y|X}(\cdot | x) \| q_Y(\cdot) \right) \nonumber \\
   & = \int_{-\infty}^{\infty} p_{Y|X}(y | x) \log \frac{p_{Y|X}(y | x)}{q_{Y}(y)}  \: dy \nonumber \\
%   & = -h(X | X=x) - \int_{-\infty}^{\infty} p_{Y|X}(y | x) \log q_{Y}(y) \: dy \nonumber \\
   & = -\log\left(\sqrt{2 \pi e}\right) - \log\left(\frac{\beta}{2A}\right) \left[ 1 - Q(A+x) - Q(A-x) \right] \nonumber \\
   & \quad - \log\left(\frac{1-\beta}{\sqrt{2 \pi}}\right) \left[ Q(A+x) + Q(A-x) \right] \nonumber \\
   & \quad + \frac{1}{2} \left\{ [(A+x)^2 + 1] Q(A+x) - (A+x) \psi(A+x) \right. \nonumber \\
   & \quad \; \left. + [(A-x)^2 + 1] Q(A-x) - (A-x) \psi(A-x) \right\}
     \nonumber\\
&=\log\frac{2A}{\beta\sqrt{2\pi e}} +\log\frac{\beta\sqrt{2\pi e}}{(1-\beta)2A}\left[ Q(A-x) +
  Q(A+x) \right]\nonumber \\
&\quad +\frac{1}{2}\left[g(A-x)+g(A+x)\right] \label{eq:D-bound1}
\end{align}
where $g(u)\triangleq u^2Q(u)-u\psi(u)$. 

%-----------------------------------------------------------------------------------
\subsection{McKellips' bound}
Using \eqref{eq:Q-bounds}, we
readily see that $g(u)\le0$ for $u>0$. By symmetry, we may restrict
attention to $0 \le x \le A$ so that
$g(A-x)+g(A+x)\le0$. Using \eqref{eq:D-bound1}, we thus have
\begin{align}
   D( p_{Y|X}&(\cdot | x) \| q_Y(\cdot))\le \log\frac{2A}{\beta\sqrt{2\pi e}} \nonumber\\
 &+\log\frac{\beta\sqrt{2\pi e}}{(1-\beta)2A}\left[ Q(A-x) + Q(A+x) \right].   \label{eq:9}
\end{align}
To recover McKellips' bound \cite{McKellips:04isit}, we choose
\begin{align}
   \beta = \frac{2A}{\sqrt{2 \pi e}+2A}\label{eq:beta-choice}
\end{align}
to make the second term in \eqref{eq:9} equal to zero,
and obtain
\begin{align} \label{eq:C-bound-1}
   C \le \log\left( 1 + \frac{2A}{\sqrt{2 \pi e}}  \right).
\end{align}
We now combine \eqref{eq:C-bound-1} with the capacity under the
(weaker) average power constraint $\E{X^2} \le A^2$.
The noise power is 1 so the signal-to-noise ratio (SNR) is $P=\sqrt{A}$.
We thus arrive at McKellips' bound in~\cite{McKellips:04isit}:
\begin{align} \label{eq:C-bound-1-McKellips}
%\boxed{
   C \le \min \left\{ \log\left( 1 + \sqrt{\frac{2P}{\pi e}}  \right), \frac{1}{2} \log\left( 1+P \right) \right\} .
%}
\end{align}

Observe that the high-SNR power loss is $10\log_{10}(\pi e/2) \approx 6.30$ dB. 
However, this comparison is based on equating the maximum power $P$ with the
average power. Instead, if we use the uniform distribution for $X$ then the average
power is $P/3$ and the high-SNR power loss reduces to the high-SNR shaping
loss of $10\log_{10}(\pi e/6) \approx 1.53$ dB (see~\cite{OzarowWyner90}).

%=============================================================
\subsection{Refined Bound}
McKellips' bound seems tight for high SNR,
roughly above 6 dB. For low SNR, below 0 dB, the
average-constraint bound $\frac{1}{2}\log(1+P)$ is tight. For the
intermediate range between 0 to 6 dB, we derive a better bound
next.

Consider $\beta$ for which
$\log\frac{\beta\sqrt{2\pi e}}{(1-\beta)2A}$ is positive, i.e., consider the range
\begin{equation}
  \label{eq:12}
  \beta\ge\frac{2A}{\sqrt{2\pi e}+2A}.
\end{equation}
Observe that $Q(A-x)+Q(A+x)$ increases with $x$ if
$x\in[0,A]$, since the derivative evaluates to
$\psi(A-x)-\psi(A+x)$ which is positive for $x\in[0,A]$. Thus, the RHS of
\eqref{eq:9} is maximized by setting $x=A$, and we obtain the bound
\begin{align}
  &D( p_{Y|X}(\cdot | x) \| q_Y(\cdot))\le
    \log\frac{2A}{\beta\sqrt{2\pi e}} \nonumber\\
&\phantom{D( p_{Y|X}(\cdot | x) \| q_Y(\cdot))}+\log\frac{\beta\sqrt{2\pi e}}{(1-\beta)2A}\left[ 1/2 + Q(2A)
 \right]\nonumber\\
&\quad =(1/2-Q(2A)) \log\frac{2A}{\sqrt{2\pi e}}-(1/2-Q(2A))\log\beta\nonumber\\
&\phantom{-(1/2-Q(2A))\log\beta}-(1/2+Q(2A))\log(1-\beta).   \label{eq:13}
\end{align}
Setting $\beta=1/2-Q(2A)$, minimizes the RHS of \eqref{eq:13}. However, from
\eqref{eq:12} this choice of $\beta$ is valid only if
\begin{equation}
  \label{eq:15}
  1/2-Q(2A)\ge \frac{2A}{\sqrt{2\pi e}+2A}
\end{equation}
which is equivalent to $A\le 2.0662\approx \sqrt{\pi
  e/2}$. Therefore, using $P=\sqrt{A}$, we have the bound
\begin{equation}
  \label{eq:14}
  C\le \beta(P) \log\sqrt{\frac{2P}{\pi e}}+H_e(\beta(P)),\quad P\le 6.303 \text{dB}
\end{equation}
where $\beta(P)=1/2-Q(2\sqrt{P})$ and $H_e(x)=-x\log(x)-(1-x)\log(x)$
is the binary entropy function with the units of nats. 

%--------------------------------------
\begin{figure}[t!]
  \centering
  % This file was created by matlab2tikz v0.4.7 running on MATLAB 8.3.
% Copyright (c) 2008--2014, Nico Schlömer <nico.schloemer@gmail.com>
% All rights reserved.
% Minimal pgfplots version: 1.3
% 
% The latest updates can be retrieved from
%   http://www.mathworks.com/matlabcentral/fileexchange/22022-matlab2tikz
% where you can also make suggestions and rate matlab2tikz.
% 
\begin{tikzpicture}

\begin{axis}[%
width=2.6in,
scale only axis,
xmin=-1,
xmax=10,
xlabel={SNR (dB)},
xmajorgrids,
ymin=0.4,
ymax=1.5,
ylabel={Capacity Bounds},
ymajorgrids,
legend style={at={(0.03,0.97)},anchor=north west,draw=black,fill=white,legend cell align=left}
]
\addplot [color=red,solid,very thick]
  table[row sep=crcr]{%
-6	0.161369636200964\\
-5	0.197726912233456\\
-4	0.240747161361905\\
-3	0.291036421211282\\
-2	0.34887885452954\\
-1	0.414111415529076\\
0	0.485944154132943\\
1	0.562788137758867\\
2	0.642148645592367\\
3	0.720660888666061\\
4	0.794353416651163\\
5	0.8628\\
6	0.9402\\
7	1.0275\\
8	1.1220\\
9	1.21025480770124\\
10	1.30882500120638\\
11	1.39559002360505\\
12	1.46620597777841\\
13	1.63275472070543\\
14	1.7428163567042\\
15	1.86964626331398\\
};
\addlegendentry{Lower bound};

\addplot [color=blue,dashed,thick]
  table[row sep=crcr]{%
-6	0.161649661346921\\
-5	0.198204580581557\\
-4	0.24173747667284\\
-3	0.293051963222674\\
-2	0.352859525386757\\
-1	0.421721912601826\\
0	0.5\\
1	0.587818317346194\\
2	0.685052334875493\\
3	0.791341177455778\\
4	0.906123095650313\\
5	1.0286866043034\\
6	1.15822808981313\\
7	1.29390718678102\\
8	1.43489360958514\\
9	1.58040221195651\\
10	1.72971580931865\\
11	1.88219718352143\\
12	2.03729261745271\\
13	2.19452948368152\\
14	2.35351013136442\\
15	2.51390383667526\\
};
\addlegendentry{$1/2\log(1+\text{SNR})$};

\addplot [color=blue,solid,mark=o,mark size=2.5,mark options={solid},thick]
  table[row sep=crcr]{%
-6	0.313298421918392\\
-5	0.347257761813228\\
-4	0.384432718621389\\
-3	0.425034177484001\\
-2	0.469269555411963\\
-1	0.51733946820157\\
0	0.569434169554036\\
1	0.625729868184064\\
2	0.686385051393113\\
3	0.751536960538134\\
4	0.821298372427153\\
5	0.895754838828694\\
6	0.974962522835825\\
7	1.0589467458713\\
8	1.14770132414599\\
9	1.24118873118819\\
10	1.33934107743872\\
11	1.44206185313117\\
12	1.54922834090375\\
13	1.66069457327432\\
14	1.77629468956403\\
15	1.89584653801677\\
};
\addlegendentry{McKellips};

\addplot [color=blue,solid,mark=diamond,mark size=3.5,mark options={solid},thick]
  table[row sep=crcr]{%
-6     0.2279\\
-5     0.2564\\
-4     0.2903\\
-3     0.3307\\
-2     0.3786\\
-1     0.4347\\
0      0.4988\\
1      0.5700\\
2      0.6467\\
3      0.7269\\
4      0.8089\\
5      0.8917\\
6      0.9747\\
%7     1.0578\\
%8     1.1408\\
%9     1.2239\\
%10    1.3069\\
%11    1.3900\\
%12    1.4730\\
%13    1.5561\\
%14    1.6391\\
%15    1.7222\\
};
\addlegendentry{Refined};

\end{axis}
\end{tikzpicture}%
  \caption{Capacity bounds for scalar AWGN channels. The rate units are bits per channel use.}
  \label{fig:1D}
\end{figure}
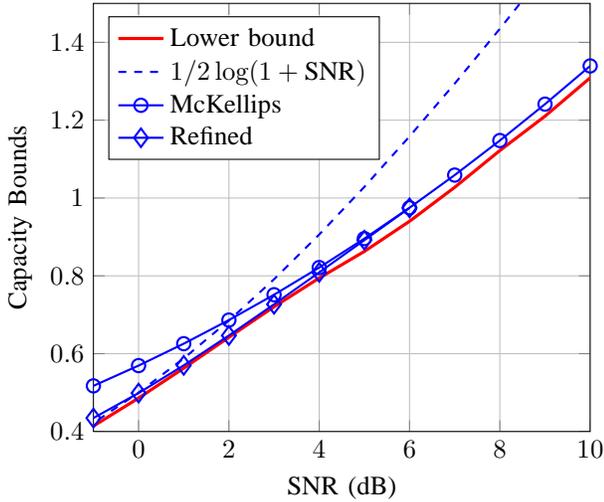
%--------------------------------------
%
The bounds are plotted in Fig. \ref{fig:1D}, where the lower bound is taken
from~\cite{McKellips:04isit} with optimized input
distribution. The refined bound is best for
SNR from 0 to 5 dB but is valid only for SNR below 6.3 dB.

%=============================================================
\section{Complex AWGN Channel}
\label{sec:2D}
Consider next the complex-alphabet AWGN channel
\begin{align}
   Y = X + Z
\end{align}
where $Z=Z_R + j Z_I$, $j=\sqrt{-1}$, and $Z_R$, $Z_I$ are independent Gaussian
random variables with mean $0$ and variance $1$.
The channel density in Cartesian coordinates with $x=[x_R,x_I]$ and
$y=[y_R,y_I]$ is
\begin{align}
   p_{Y|X}(y|x) = \frac{1}{2 \pi} e^{-\| y-x\|^2/2}.
\end{align}
In spherical coordinates with $x=[|x|,\phi_x]$ and $y=[|y|,\phi_y]$, the
density is 
\begin{align}
   p_{Y|X}(y|x) = \frac{1}{2 \pi} e^{-\left( |y|^2+|x|^2-2 |x| |y| \cos(\phi_y-\phi_x) \right)/2}.
\end{align}

Consider again the peak power constraint $|X| \le A$ where $A>0$.
We choose the test density
\begin{align} \label{eq:test-density-2D}
   q_{Y}(y) = \left\{
   \begin{array}{ll}
      \frac{\beta}{\pi A^2}, & |y| \le A \\
      \frac{1-\beta}{2\pi \left(1+\sqrt{\pi/2}A\right)} e^{-(|y|-A)^2/2}, & |y|>A.
   \end{array} \right.
\end{align}
Again, the test density is uniform in the interval $|y|\le A$ and
is a ``split and scaled" Gaussian density for $|y|> A$.
Inserting into the divergence \eqref{eq:upper-bound}, we
have\footnote{By symmetry we may restrict attention to $\phi_x=0$, i.e., real $x$ satisfying $0 \le x \le A$.}
\begin{align} 
   & D\left( p_{Y|X}(\cdot | x) \| q_{Y}(\cdot) \right)
      = -\log(2 \pi e) - \E{\log q_{Y}(Y)} \nonumber \\
   & = \log \frac{\pi A^2}{2\pi e \beta} - \E{\log\left( q_{Y}(Y) \frac{\pi A^2}{\beta} \right)} \nonumber \\
   & = \log \frac{A^2}{2 e \beta}  - \int_A^{\infty} e^{-(z^2+|x|^2)/2} I_0(z |x|) \nonumber \\ 
   & \quad \left[ \log\left(\frac{(1-\beta) \pi A^2}{2\pi
     \left(1+\sqrt{\pi/2}A\right) \beta }\right) - \frac{(z-A)^2}{2}
     \right] z \: dz.\label{eq:44}\\
   & = \log \frac{A^2}{2 e \beta} +\log\left(\frac {2e\left(1+\sqrt{\pi/2}A\right) \beta
     }{(1-\beta) A^2}\right)Q_1(|x|,A)\nonumber\\ 
&\quad\quad\quad\quad\quad-\tg(|x|,A) \label{eq:D-bound2}
\end{align}
where
\begin{equation}
  \label{eq:17}
  \tg(|x|,A)=\int_A^{\infty} e^{-(z^2+|x|^2)/2} I_0(z |x|)\left[1 - \frac{(z-A)^2}{2}\right] z \: dz.
\end{equation}
\begin{lemma}
$\tg(|x|,A)$ in \eqref{eq:17} is positive for
$|x|\in[0,A]$.  
\label{lem:tgcomplex}
\end{lemma}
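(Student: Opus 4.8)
The plan is to show $\tg(|x|,A)>0$ by splitting the integral in \eqref{eq:17} into the part where the bracket $1-(z-A)^2/2$ is nonnegative and the part where it is negative. Writing $a=|x|$ and substituting $u=z-A$, the bracket is nonnegative for $u\in[-\sqrt2,\sqrt2]$ and negative for $u>\sqrt2$; since we integrate over $z\ge A$, i.e.\ $u\ge 0$, the relevant split point is $u=\sqrt2$, i.e.\ $z=A+\sqrt2$. First I would write $\tg(a,A)=\tI_1-\tI_2$ where $\tI_1=\int_A^{A+\sqrt2} e^{-(z^2+a^2)/2}I_0(za)\,[1-(z-A)^2/2]\,z\,dz$ and $\tI_2=\int_{A+\sqrt2}^{\infty} e^{-(z^2+a^2)/2}I_0(za)\,[(z-A)^2/2-1]\,z\,dz$, both integrands now nonnegative. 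The goal becomes the bound $\tI_1>\tI_2$.

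The key idea is to control the weight $w(z)\triangleq z\,e^{-(z^2+a^2)/2}I_0(za)$ appearing in both integrals. Note $w(z)\,dz$ is (up to the normalizing constant $1$, since $Q_1(a,0)=1$) the density of the Rician-distributed output magnitude; in particular $\int_0^\infty w(z)\,dz=1$ and $Q_1(a,A)=\int_A^\infty w(z)\,dz$. The natural move is to show that $w(z)$ is, on the tail $z\ge A$, dominated by its value near the split point, or more precisely to obtain a \emph{ratio} bound: for $z\ge A+\sqrt2$ one can try to show $w(z)\le w(2A+2\sqrt2 - z)\cdot(\text{something})$ by a reflection argument about $z=A+\sqrt2$, or failing that, a crude monotone-decay bound $w(z)\le w(A+\sqrt2)\,e^{-((z)^2-(A+\sqrt2)^2)/2}\,(I_0(za)/I_0((A+\sqrt2)a))$ combined with the elementary bound $I_0(za)\le e^{za}$ or $I_0(za)/I_0((A+\sqrt2)a)\le e^{(z-A-\sqrt2)a}\le e^{(z-A-\sqrt2)A}$ using $a\le A$. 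Substituting $z=A+\sqrt2+t$, $t\ge0$, gives $\tI_2\le C(a,A)\int_0^\infty \big(t^2/2+\sqrt2\,t+ t(\text{lower order})\big)\,e^{-(\sqrt2+t)^2/2+\cdots}\,dt$, a finite Gaussian-moment integral that can be evaluated in closed form using \eqref{eq:psi-integral-1}--\eqref{eq:psi-integral-3}. For $\tI_1$, since $1-(z-A)^2/2\ge 0$ on the whole interval, I would lower-bound it by restricting to a subinterval near $z=A$ where the bracket is bounded below by a constant (say $[A,A+1/2]$, where the bracket is $\ge 7/8$) and where $w(z)$ is comparable to $w(A)$; a matching lower bound on $w$ via $I_0(za)\ge I_0(aA)\ge 1$ and $e^{-z^2/2}\ge e^{-(A+1/2)^2/2}$ again yields a closed-form lower bound. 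Comparing the two closed-form expressions should give $\tI_1>\tI_2$ for all $0\le a\le A$ and all $A>0$.

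The main obstacle is that the Bessel factor $I_0(za)$ grows with $z$, so a naive tail bound on $w(z)$ is not obviously summable against the polynomial $(z-A)^2/2-1$; the Gaussian factor $e^{-z^2/2}$ wins, but one must be careful because the exponent $-z^2/2+za$ is maximized at $z=a\le A$, so on $z\ge A$ the combination $e^{-z^2/2}I_0(za)\le e^{-z^2/2}e^{za}=e^{-(z-a)^2/2+\cdots}$ is indeed decreasing — this is the fact that makes the argument go through, and getting a clean, uniform-in-$(a,A)$ version of it is the crux. An alternative, possibly cleaner, route I would keep in reserve: differentiate $\tg(a,A)$ with respect to $A$ (or express things via the Marcum-$Q$ derivative identities $\partial Q_1(a,b)/\partial b = -b\,e^{-(a^2+b^2)/2}I_0(ab)$) to reduce \eqref{eq:17} by integration by parts to an expression in $Q_1(a,A)$ and $\psi$-type terms whose positivity is more transparent; in fact, integrating $\int_A^\infty w(z)[1-(z-A)^2/2]dz$ by parts against $(z-A)$ should relate $\tg$ directly to $Q_1(a,A)$ minus a manifestly small correction, and this is probably the slick proof the authors intend.
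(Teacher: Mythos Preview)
Your main approach---split the $z$-integral at $z=A+\sqrt2$ and bound $\tI_1$ from below and $\tI_2$ from above by crude pointwise estimates on the Rician weight---has a genuine gap: it cannot succeed uniformly in $A$. At $a=|x|=0$ one computes $\tg(0,A)=\sqrt{2\pi}\,A\,Q(A)$ exactly, so the margin $\tI_1-\tI_2$ is only of order $A$ as $A\downarrow 0$ (and is exactly $0$ at $A=0$). Any bound that throws away a constant factor---restricting $\tI_1$ to the subinterval $[A,A+1/2]$ with bracket $\ge 7/8$, or bounding $I_0$ by an exponential---therefore fails for small $A$: concretely, at $A=0$, $a=0$ one has $\tI_1=\tI_2=e^{-1}\approx 0.368$, whereas your lower bound on $\tI_1$ gives only about $0.10$. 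Your integration-by-parts alternative leads to $\tg(a,A)=Q_1(a,A)-\int_A^\infty Q_1(a,z)(z-A)\,dz$, which does not make positivity any more transparent.

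The paper's proof avoids this entirely by using the integral representation $I_0(x)=\frac{1}{\pi}\int_0^\pi e^{x\cos\phi}\,d\phi$ to write $\tg$ as a $\phi$-average of inner integrals $\int_A^\infty[\,2-(z-A)^2\,](z-u+A)\,\psi(z-u+A)\,\ldots$, each of which (after the shift $\tilde z=z-|x|\cos\phi$, $u=A-|x|\cos\phi$) involves only Gaussian moments and evaluates \emph{exactly} via \eqref{eq:psi-integral-1}--\eqref{eq:psi-integral-3} to $u(A-u)[\psi(u)-uQ(u)]+(A+u)Q(u)$. Positivity of this closed form for every $u\in[0,A+|x|]$ then follows directly from the two-sided $Q$-function bounds \eqref{eq:Q-bounds}, handled separately for $u\le A$ and $u\ge A$. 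The key idea you are missing is that expanding $I_0$ turns the problem into a family of one-dimensional Gaussian integrals with no Bessel factor left, which can be done exactly rather than estimated.
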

\begin{proof}
 We use the definition of $I_0(x)$ to re-write \eqref{eq:17} as
\begin{align} \label{eq:D-bound2b}
%   & \int_A^{\infty} \left[ \frac{(z-A)^2}{2} - 1 \right] z \, e^{-\frac{z^2+|x|^2}{2}}
%       \left[ \frac{1}{\pi} \int_0^{\pi} e^{z|x| \cos \phi} d\phi \right] dz \nonumber \\
   & \frac{1}{\sqrt{2\pi}} \int_0^{\pi} e^{-(|x|^2/2) \sin^2 \phi} \nonumber \\
   & \left[ \int_A^{\infty} \left[ 2 - (z-A)^2 \right] z \: \psi\left( z-|x|\cos\phi \right) \, dz\right] d\phi .
\end{align}
The integral in square brackets can be simplified by substituting
$\tilde{z}=z-|x|\cos\phi$ and $u=A-|x|\cos\phi$ to become
\begin{align} \label{eq:D-bound2c}
   \int_{u}^{\infty} \left[ 2 - (\tilde{z}-u)^2 \right] (\tilde{z}-u+A) \, \psi\left( \tilde{z} \right) \, d\tilde{z} .
\end{align}
Using \eqref{eq:psi-integral-1}-\eqref{eq:psi-integral-3}, the integral \eqref{eq:D-bound2c} evaluates to
\begin{align} 
   u(A-u) \left[ \psi(u) - u \, Q(u) \right] + (A+u)Q(u) \label{eq:D-bound2d}
\end{align}
or alternatively to
\begin{align}
   -(A-u) \left[ (1+u^2) Q(u) - u \, \psi(u) \right] + 2AQ(u) . \label{eq:D-bound2e}
\end{align}
Note that we have $0 \le u \le A+|x|$. We consider two cases.
\begin{itemize}
\item $0 \le u \le A$:
We have $u(A-u) \ge 0$ and the bound on the right-hand side of \eqref{eq:Q-bounds}
tells us that \eqref{eq:D-bound2d} is negative.
\item $A \le u \le A+|x|$:
We have $A-u \le 0$ and the bound on the left-hand side of \eqref{eq:Q-bounds}
tells us that \eqref{eq:D-bound2e} is positive.
\end{itemize}
Thus, the expression \eqref{eq:D-bound2d} (or equivalently \eqref{eq:D-bound2e}) is positive.
But this implies that the integrals in
\eqref{eq:D-bound2b}-\eqref{eq:D-bound2c} are all positive, and we
conclude that $\tg(A,x)$ is positive.
\end{proof}

%---------------------------
\subsection{McKellips-type Bound}
Using Lemma \ref{lem:tgcomplex} in \eqref{eq:D-bound2}, we have
\begin{align}
  \label{eq:16}
  & D\left( p_{Y|X}(\cdot | x) \| q_{Y}(\cdot) \right)\nonumber\\
&\le \log \frac{A^2}{2 e \beta} +\log\left(\frac {2e\left(1+\sqrt{\pi/2}A\right) \beta }{(1-\beta) A^2}\right)Q_1(|x|,A).
\end{align}
By choosing $\beta$ to make the second term above zero, we have
\begin{align} \label{eq:beta-choice-2}
   \beta = \frac{A^2}{A^2 + 2 e \left(1+\sqrt{\pi/2}A\right)}
\end{align}
which results in the McKellips-type bound
\begin{align} \label{eq:D-bound2a}
   & D\left( p_{Y|X}(\cdot | x ) \| q_{Y}(\cdot) \right) \le \log\left( 1 +\sqrt{\pi/2}A + \frac{A^2}{2 e}  \right).
\end{align}
Since this bound is independent of $x$, we have
\begin{align} \label{eq:C-bound-2}
   C < \log\left( 1 +\sqrt{\pi/2}A + \frac{A^2}{2 e}  \right).
\end{align}
We combine \eqref{eq:C-bound-2} with the capacity under the
(weaker) average power constraint $\E{|X|^2} \le A^2$. Observe
that the complex noise has power 2 so the corresponding SNR
is $P=A^2/2$. We thus have the simple bound
\begin{align} \label{eq:C-bound-1-simple}
%\boxed{
   C \le \min \left\{ \log \left( 1 +\sqrt{\pi P} + \frac{P}{e}  \right), \log\left( 1+P \right) \right\}
%}
\end{align}
where we measure the rate per complex symbol (two real dimensions).

The high-SNR power loss is $10\log_{10}(e) \approx 4.34$ dB. 
Again, however, this comparison is based on equating the maximum power $P$ with the
average power. Instead, if we use the uniform distribution for $X$ then the average
power is $P/2$ and the high-SNR power loss reduces to a shaping loss of
$10\log_{10}(e/2) \approx 1.33$ dB.

%=============================================================
\subsection{Refined Bound}
We refine the upper bound for the complex AWGN channel in a manner similar to the refinement in
the real AWGN case. First, rewrite the final expression for $D\left(
  p_{Y|X}(\cdot | x) \| q_Y(\cdot) \right)$ in \eqref{eq:44} as
\begin{align} 
   & D\left( p_{Y|X}(\cdot | x) \| q_Y(\cdot) \right) = \log\left(\frac{A^2}{2 e \beta}\right) \nonumber\\
  & + \log\frac{2\beta(1+\sqrt{\pi/2}A)}{(1-\beta)A^2} Q_1(|x|,A)
    +g(|x|,A) \label{eq:D-2}
\end{align}
where 
\begin{align*}
g(|x|,A)=\int_A^{\infty} \frac{(z-A)^2}{2} \, z \,
e^{-(z^2+|x|^2)/2} I_0(z |x|) \: dz. 
\end{align*}
The functions $Q_1(|x|,A)$ and $g(|x|,A)$ are both increasing in
$|x|$. This is proved as part of the general $n$-dimensional case in
Appendix B. Hence, for a positive
$\log\frac{2\beta(1+\sqrt{\pi/2}A)}{(1-\beta)A^2}$ the expression
\eqref{eq:D-2} is maximized at $|x|=A$, and we obtain
\begin{align}
   & D\left( p_{Y|X}(\cdot | x) \| q_Y(\cdot) \right) \le \log\left(\frac{A^2}{2 e \beta}\right) \nonumber\\
  & + \log\frac{2\beta(1+\sqrt{\pi/2}A)}{(1-\beta)A^2} Q_1(A,A)
    +g(A,A)  \label{eq:18}
\end{align}
provided that
\begin{equation}
  \label{eq:19}
\beta\ge\frac{A^2}{A^2+2(1+\sqrt{\pi/2}A)}.
\end{equation}
Rewriting the bound of \eqref{eq:18} as
\begin{align}
   & D\left( p_{Y|X}(\cdot | x) \| q_Y(\cdot) \right) \le
     Q_1(A,A)\log\frac{1+\sqrt{\pi/2}A}{e} \nonumber\\
&+(1-Q_1(A,A))\log\frac{A^2}{2e} +g(A,A)\nonumber\\
&-(1-Q_1(A,A))\log\beta-Q_1(A,A)\log(1-\beta)
  \label{eq:20}
\end{align}
we see that $\beta=1-Q_1(A,A)$ minimizes the RHS of \eqref{eq:20}.
However, from \eqref{eq:19} this choice of $\beta$ is valid only if
\begin{align}
  \label{eq:21}
  1-Q_1(A,A)\ge\frac{A^2}{A^2+2(1+\sqrt{\pi/2}A)}
\end{align}
which requires $A<2.36$ numerically. Therefore, setting
$P=A^2/2$ we have
\begin{align}
C\le &(1-\beta(P))\log(1+\sqrt{\pi P})+\beta(P)\log\frac{P}{e}\nonumber\\
&+H_e(\beta(P)) -\tg(\sqrt{2P},\sqrt{2P}),\quad P\le 4.45 \text{dB}
  \label{eq:22}
\end{align}
where $\beta(P)=1-Q_1(\sqrt{2P},\sqrt{2P})$ and we have used the
relationship $\tg(|x|,A)=Q_1(|x|,A)-g(|x|,A)$.
%
% Similar to the real case, $D(\beta,x)$ is convex in $\beta$ and is minimized by
% \begin{align}
%   \label{eq:7-2}
%   \beta_m(x) = 1-Q_1(|x|,A).
% \end{align}
% For small $A$, a good choice is $\beta_m(A)=1-Q_1(A,A)$.

% A second good choice for $\beta$ again satisfies $D(\beta,0)=D(\beta,A)$.
% The resulting $\beta$ is
% \begin{align}
%   \label{eq:2-2}
%   \beta^*=\frac{A^2 e^{c(A)}}{2(1+\sqrt{\pi/2}A)+A^2 e^{c(A)}}
% \end{align}
% where
% \begin{align}
%   \label{eq:5-2}
%   c(A) = \frac{g(0,A)-g(A,A)}{Q_1(A,A)-Q_1(0,A)}.
% \end{align}
%We find that $A\to\infty$ gives $c(A)\to -1$ and $\beta^*$ approaches the choice \eqref{eq:beta-choice-2}.

% Collecting the above results, we have
% \begin{align}
%   \label{eq:refined-bounds-complex}
%   &C<\max(D(\beta_m(A),A),D(\beta_m(A),0))\nonumber \\
%   &C<D(\beta^*,A) \nonumber
% \end{align}
% where the former bound can be shown to be valid for $A<A_0$, where $A_0$ is the
% unique positive solution to 
% $$1-Q_1(A,A)=\frac{A^2}{A^2+2(1+\sqrt{\pi/2}A)}.$$
% We compute $A_0\approx2.36$ which corresponds to an SNR of $\approx$4.46 dB.
% For higher values of $A$, we require that
% $D(\beta,x)$ has no local maximum in $(0,A)$, and that the maximum is
% achieved either at $x=0$ or $x=A$ for all $\beta$. While we have not
% proved the above, extensive numerical experimentation suggests that it
% is true.

%--------------------------------------
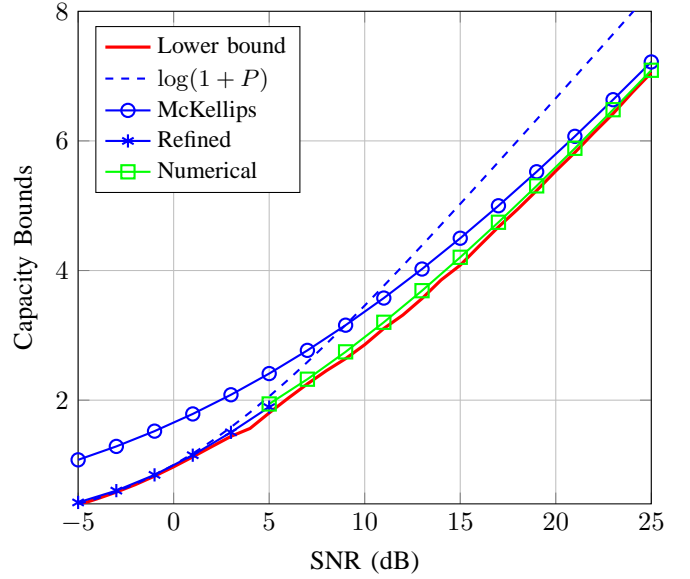
\begin{figure}[t!]
  \centering
  % This file was created by matlab2tikz v0.4.7 running on MATLAB 8.3.
% Copyright (c) 2008--2014, Nico Schlömer <nico.schloemer@gmail.com>
% All rights reserved.
% Minimal pgfplots version: 1.3
% 
% The latest updates can be retrieved from
%   http://www.mathworks.com/matlabcentral/fileexchange/22022-matlab2tikz
% where you can also make suggestions and rate matlab2tikz.
% 
\begin{tikzpicture}

\begin{axis}[%
width=3in,
height=2.58031536638153in,
scale only axis,
xmin=-5,
xmax=25,
xlabel={$\text{SNR}$ (dB)},
xmajorgrids,
ymin=0.4,
ymax=8,
ylabel={Capacity Bounds},
ymajorgrids,
legend style={at={(0.03,0.97)},anchor=north west,draw=black,fill=white,legend cell align=left,font=\small}
]
\addplot [color=red,solid,very thick]
  table[row sep=crcr]{%
-5	0.384281592903321\\
-4	0.478524051038729\\
-3	0.58149472201081\\
-2	0.697682594040937\\
-1	0.828216965657955\\
0	0.971888071362703\\
1	1.12557612261591\\
2	1.28429728960229\\
3	1.44132178162649\\
4	1.56400842299929\\
5	1.80319597687237\\
6	2.03183943337364\\
7	2.25059473509028\\
8	2.45583214510868\\
9	2.64047716777868\\
10	2.85572721331768\\
11	3.10590300772958\\
12	3.31401908890385\\
13	3.56779023432795\\
14	3.85445320828492\\
15	4.082307490959\\
16	4.38068569316326\\
17	4.67233390223964\\
18	4.94444707881743\\
19	5.23336692176461\\
20	5.53625977094028\\
21	5.82177319312868\\
22	6.12793098947061\\
23	6.42078938398519\\
24	6.74638768329348\\
25	7.05289299248812\\
};
\addlegendentry{Lower bound};

\addplot [color=blue,dashed,thick]
  table[row sep=crcr]{%
-5	0.396409161163114\\
-4	0.48347495334568\\
-3	0.586103926445348\\
-2	0.705719050773513\\
-1	0.843443825203652\\
0	1\\
1	1.17563663469239\\
2	1.37010466975099\\
3	1.58268235491156\\
4	1.81224619130063\\
5	2.0573732086068\\
6	2.31645617962626\\
7	2.58781437356203\\
8	2.86978721917029\\
9	3.16080442391302\\
10	3.4594316186373\\
11	3.76439436704286\\
12	4.07458523490543\\
13	4.38905896736305\\
14	4.70702026272884\\
15	5.02780767335052\\
16	5.3508761542486\\
17	5.67577990180488\\
18	6.00215644400198\\
19	6.32971245944191\\
20	6.65821148275179\\
21	6.98746345895592\\
22	7.31731600193655\\
23	7.64764716249041\\
24	7.97835949780125\\
25	8.30937524121281\\
};
\addlegendentry{$\log(1+P)$};

\addplot [color=blue,solid,mark=o,mark size=2.5,mark options={solid},thick]
  table[row sep=crcr]{%
-5	1.07933219075026\\
-3	1.28639487240923\\
-1	1.52201465010971\\
1	1.78737341273123\\
3	2.08327246933848\\
5	2.41013894129156\\
7	2.76803495388981\\
9	3.15665936362013\\
11	3.57534033112611\\
13	4.0230264535605\\
15	4.49828979794224\\
17	4.99935335913516\\
19	5.52414877157914\\
21	6.07040085647899\\
23	6.63572790845897\\
25	7.21774328342879\\
};
\addlegendentry{McKellips};

\addplot [color=blue,solid,mark=asterisk,mark size=2.5,mark options={solid},thick]
  table[row sep=crcr]{%
-5	0.42177751187092\\
-3	0.60413802000342\\
-1	0.847271527164506\\
1	1.15103777121458\\
3	1.50403038169789\\
5	1.89306204537921\\
% 7	2.30880311552489\\
% 9	2.74479475435329\\
% 11	3.19624123164846\\
% 13	3.65948597911043\\
% 14	3.89462673355558\\
% 15	4.13171661047937\\
% 16	4.37050333650483\\
% 17	4.61076555947978\\
% 18	4.85230919241586\\
% 19	5.09496419736915\\
% 20	5.33858174659411\\
% 21	5.5830317112219\\
% 22	5.82820043689172\\
% 23	6.07398877245959\\
% 24	6.3203103229361\\
% 25	6.5670899016917\\
};
\addlegendentry{Refined};

\addplot [color=green,solid,mark=square,mark size=2.5,mark options={solid},thick]
  table[row sep=crcr]{%
% -5	1.04004787086084\\
% -3	1.06852665920902\\
% -1	1.168749816228\\
% 1	1.3508258361939\\
% 3	1.61225440440316\\
5	1.94141951568341\\
7	2.32321172560059\\
9	2.74535520921108\\
11	3.2016677092876\\
13	3.68901616765074\\
15	4.20434829103437\\
17	4.74441500302557\\
19	5.3059618276815\\
21	5.88587922177028\\
23	6.48130331277347\\
25	7.08967245092541\\
};
\addlegendentry{Numerical};

\end{axis}
\end{tikzpicture}%
  \caption{Capacity bounds for complex AWGN channels. The rate units
    are bits per 2 dimensions.}
  \label{fig:2D}
\end{figure}
%--------------------------------------
%
The bounds are plotted in Fig. \ref{fig:2D}. The lower bound is
obtained by evaluating mutual information for the equi-probable complex
constellation
\begin{align}
\{0\}\;\cup\bigcup_{0\le k\le
  \lfloor A/2\rfloor-1}\{(A-2k)e^{jn\theta_k}:0\le n\le N_k\}
\end{align}
where $\theta_k=\frac{2\pi}{3(A-2k)}$ and $N_k=\lfloor
3(A-2k)\rfloor$. We see that the refined upper bound, valid for SNR
less than 4.45 dB, is close to the lower bound. The numerical
evaluation of $\min_{\beta}\max_x D\left( p_{Y|X}(\cdot | x) \|
  q_Y(\cdot) \right)$ is seen to yield best bounds throughout. 

%-----------------------------------
\subsection{Analytical lower bound}
A lower bound on the capacity of real AWGN channels was derived in
\cite{OzarowWyner90} by using PAM-like constellations.
The input was peak-power constrained. By selecting the number of points suitably,
PAM achieves rates within
a small gap from the average-power constrained capacity
$\frac{1}{2}\log(1+\text{SNR})$.

For two-dimensions, consider the constellation
\begin{equation}
  \label{eq:52}
  A_N=\{0\}\cup\bigcup_{n=1}^{N-1}\{(n+0.5)\Delta
  e^{j(l+0.5)\theta_n}: l=0,1,\ldots,2n\}
\end{equation}
where $N\ge2$ is a positive integer, $\Delta$ is a positive real
number and $\theta_n=2\pi/(2n+1)$. The set $A_N$ contains $N^2$ points
including the origin and $(2n+1)$ equally-spaced points on a circle of radius
$(n+0.5)\Delta$ for $n=1,2,\ldots,N-1$. The 9 points of $A_3$ are shown in
Fig. \ref{fig:A3} for illustration.

%%%%%%%%%%%%%%
\begin{figure}[t!]
  \centering
  \begin{tikzpicture}[scale=1]
\def\PI{3.14}
    \fill[black!5] (0,0) circle (1);
    \node at (0,0) {X};
    \foreach \l/\c in {0/20,1/30,2/40} {
        \fill[black!\c] ({\l*120}:1) arc ({\l*120}:{(\l+1)*120}:1)
                 -- ++({(\l+1)*120}:1)  arc ({(\l+1)*120}:{\l*120}:2) -- ({\l*120}:1);
        \node at ({1.5*cos(2*\PI/3*(\l+0.5) r)},{1.5*sin(2*\PI/3*(\l+0.5) r)}) {X};
    }
    \draw[thin,dashed] (0,0) circle (1.5); 
    \foreach \l/\c in {0/45,1/35,2/25,3/15,4/5} {
        \fill[black!\c] ({\l*72}:2) arc ({\l*72}:{(\l+1)*72}:2)
                 -- ++({(\l+1)*72}:1)  arc ({(\l+1)*72}:{\l*72}:3) -- ({\l*72}:2);
        \node at ({2.5*cos(2*\PI/5*(\l+0.5)
          r)},{2.5*sin(2*\PI/5*(\l+0.5) r)}) {X};
    }
    \draw[thin,dashed] (0,0) circle (2.5); 
    \draw[<->] (-3.5,0) -- (3.5,0);
    \draw[<->] (0,-3.5) -- (0,3.5);
    \foreach \l/\s in {1/$\Delta$,2/$2\Delta$,3/$3\Delta$}
          \node[below] at (\l,0) {\s};
  \end{tikzpicture}
  \caption{The two-dimensional constellation $A_3$. X denotes a
    constellation point. For $x\in A_3$, $(X+U)|X=x$ is distributed
    uniformly in the shaded region around $x$.}
  \label{fig:A3}
\end{figure}
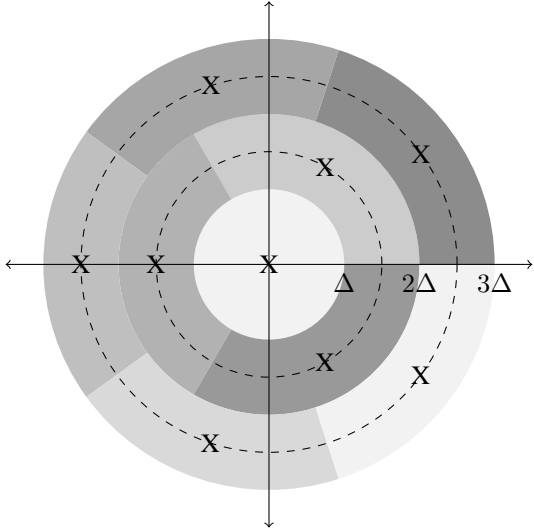
%%%%%%%%%%%%%%

Define a random variable $U$ jointly distributed with $X\sim\text{Unif}(A_N)$ such that
$\tX=X+U$ is uniformly distributed in the circle of radius $N\Delta$
around the origin. For the constellation $A_3$, the distribution of
$\tX|X=x$ is illustrated through the shading around each point $x\in
A_3$ in Fig. \ref{fig:A3}. Specifically, for the constellation $A_N$, $U$ is defined
so that $(X+U)|X=0$ is uniform in the circle of radius $\Delta$ around
the origin, and $(X+U)|X=(n+0.5)\Delta e^{j(l+0.5)\theta_n}$ is
uniform in the region 
\begin{equation}
  \label{eq:53}
\{(r\cos\theta,r\sin\theta): n\le \frac{r}{\Delta}\le
n+1,l\le\frac{\theta}{\theta_n}\le l+1\}.  
\end{equation}
As before, the received value is
$Y=X+Z$. Since $\tX-X-Y$ forms a Markov chain, we have $I(X;Y)\ge I(\tX;Y)$ by
the data processing inequality. We further lower bound $I(\tX;Y)$ by using a strategy similar to the
real case in \cite{OzarowWyner90}. However, unlike the real case, $U$
and $X$ are correlated resulting in additional computations.

Since $I(\tX;Y)=h(\tX)-h(\tX|Y)$ and $h(\tX)=\log_2 \pi N^2\Delta^2$, we
lower bound $I(\tX;Y)$ by first upper bounding $h(\tX|Y)$ as follows:
\begin{align}
h(\tX|Y=y)&=-\int p(\tx|y)\log_2 p(\tx|y) d\tx\nonumber\\
&\le -\int p(\tx|y)\log_2 q_y(\tx) d\tx \label{eq:54}
\end{align}
where $q_y(\tx)$ is any valid density parametrized by $y$. Choosing
$q_y(\tx)=\frac{1}{2\pi s^2}e^{-||\tx-ky||^2/2s^2}$ (with parameters
$k$ and $s$ to be optimized later), we obtain
\begin{align}
\frac{h(\tX|Y=y)}{\log_2e}&\le \log 2\pi s^2+\frac{1}{2s^2}E[||\tX-ky||^2|Y=y]\nonumber\\
\frac{h(\tX|Y)}{\log_2e}&=\log 2\pi s^2+\frac{1}{2s^2}E[||\tX-kY||^2].  \label{eq:55}
\end{align}
Expanding using $\tX=X+U$ and $Y=X+Z$, and using the fact that $Z$ is
independent of $X$ and $U$, we have
\begin{equation}
  \label{eq:56}
  E[||\tX-kY||^2] = \frac{N^2\Delta^2}{2}-2(1+\rho_N)P_Nk+(P_N+2)k^2
\end{equation}
where $P_N\triangleq E[X^2]$, $\rho_N\triangleq\re(E[X^*U])/P_N$, and we have
used $E[||\tX||^2]=\frac{N^2\Delta^2}{2}$. To obtain the lowest upper
bound for $E[||\tX-kY||^2]$, we set
$k=k^*=\frac{(1+\rho_N)P_N}{2+P_N}$ and
$s^2=\frac{1}{2}E[||\tX-k^*Y||^2]$. Simplifying, we have
\begin{equation}
  \label{eq:57}
  h(\tX|Y)\le \log_2 \left(\pi e\left[\frac{N^2\Delta^2}{2}-\frac{P_N^2(1+\rho_N)^2}{P_N+2}\right]\right).
\end{equation}
We continue to bound lower $I(\tX;Y)$ by
\begin{align}
&\log_2\pi N^2\Delta^2-\log_2\left(\pi e\left[\frac{N^2\Delta^2}{2}-\frac{P_N^2(1+\rho_N)^2}{P_N+2}\right]\right)\nonumber\\
&=\log_2N^2-\log_2\frac{e}{2}-\log_2\left(N^2-\frac{P^2_N(1+\rho_N)^2}{(1+P_N/2)\Delta^2}\right).  \label{eq:58}
\end{align}
Defining $\tC=\log_2(1+P_N/2)$, and setting
$N^2=\alpha 2^{\tC}=\alpha(1+P_N/2)$ and simplifying, we have
\begin{align}
  I(\tX;Y)&\ge \tC-\log_2\frac{e}{2}-\log_2\left(\frac{N^2}{\alpha}-\frac{P^2_N(1+\rho_N)^2}{N^2\Delta^2}\right).
  \label{eq:59}  
\end{align}
In Appendix C, we show that
$P_N=\frac{\Delta^2N^2}{2}\left(1-O(1/N^2)\right)$,
$-0.66\le\rho_NN^2/(1-O(1/N^2))\le-0.64$ and provide details of the simplification needed to obtain the
following lower bound:
\begin{equation}
  \label{eq:60}
  I(X;Y)\ge \tC-0.45-\log_2\left(1+\frac{1.82}{\alpha}+O\left(\frac{1}{N^2}\right)\right).
\end{equation}
We see that the gap to the average-power constrained capacity with a
finite constellation can be made as small as 0.45 by choosing a large
enough $\alpha$ at high rates (large $N$). For moderate $N$, choosing
$\alpha=4$ results in a gap of less than 1 to capacity. Finally,
the rate in \eqref{eq:58} is achieved at a peak-power constraint of
$|X|\le (N-0.5)\Delta$ or an equivalent $\text{SNR}=(N-0.5)^2\Delta^2/2$. This lets one compare the analytical
lower bound against the other bounds shown in Fig. \ref{fig:2D}.

%=============================================================
\section{$n$-Dimensional AWGN Channel}
\label{sec:nD}
Consider next the $n$-dimensinal AWGN channel
\begin{align}
   \ul{Y} = \ul{X} + \ul{Z}
\end{align}
where $\ul{Z}=[Z_1,Z_2,\ldots,Z_n]$ has independent Gaussian entries
with mean $0$ and variance $1$.
The channel density in Cartesian coordinates with $\ul{x}=[x_1,\ldots,x_n]$ and
$\ul{y}=[y_1,\ldots,y_n]$ is
\begin{align}
   p_{\ul{Y}|\ul{X}}(\ul{y}|\ul{x}) = \frac{1}{(2 \pi)^{n/2}} e^{-\| \ul{y} - \ul{x} \|^2/2}.
\end{align}

The $n$-dimensional spherical coordinate system has a radial coordinate $r$
and $n-1$ angular coordinates $\phi_i$, $i=1,2,\ldots,n-1$, where the domain
of $\phi_i$, $i=1,2,\ldots,n-2$, is $[0,\pi)$, and the domain of $\phi_{n-1}$ is
$[0,2\pi)$.\footnote{See http://en.wikipedia.org/wiki/N-sphere.}
For a point $\ul{x}$ with spherical coordinates $\ul{x}=[r_x,\phi_{x,1},\ldots,\phi_{x,n-1}]$ we 
can compute the Cartesian coordinates $\ul{x}=[x_1,\ldots,x_n]$ via
\begin{align*}
  x_1 & = r_x \cos(\phi_{x,1}) \\
  x_2 & = r_x \sin(\phi_{x,1}) \cos(\phi_{x,2}) \\
  x_3 & = r_x \sin(\phi_{x,1}) \sin(\phi_{x,2}) \cos(\phi_{x,3}) \\
  \vdots \\
  x_{n-1} & = r_x \sin(\phi_{x,1}) \ldots \sin(\phi_{x,n-2}) \cos(\phi_{x,n-1}) \\
  x_{n} & = r_x \sin(\phi_{x,1}) \ldots \sin(\phi_{x,n-2}) \sin(\phi_{x,n-1}). 
\end{align*}
In spherical coordinates the channel density has a complex form due to the
many sine and cosine terms. However, by symmetry we may restrict
attention to points $\ul{x}$ with $\phi_{x,i}=0$ for all $i$. For such $\ul{x}$, the
channel density in $n$-dimensional spherical coordinates is simply 
\begin{align}
   p_{\ul{Y}|\ul{X}}(\ul{y}|\ul{x}) = \frac{1}{(2 \pi)^{n/2}} e^{-\left( r_y^2 + r_x^2 - 2 r_x r_y \cos \phi_{y,1} \right)/2}.
\end{align}

Consider now the $n$-dimensional amplitude constraint $\|\ul{X}\| \le A$ where $A>0$.
We choose the test density
\begin{align} \label{eq:test-density-nD}
   q_{\ul{Y}}(\ul{y}) = \left\{
   \begin{array}{ll}
      \frac{\beta}{\text{Vol}(A)}, & r_y \le A \\
      \frac{1-\beta}{k_n(A) (2\pi)^{n/2}} e^{-(r_y-A)^2/2}, & r_y>A
   \end{array} \right.
\end{align}
where $\text{Vol}(r)$ is the volume of an $n$-dimensional ball with
radius $r$ and $k_n(A)$ is a constant that ensures that $q_{\ul{Y}}(\cdot)$
is a density. Again, the test density is uniform in the interval $r_y\le A$ and
is a ``split and scaled" Gaussian density for $r_y> A$.
We have\footnote{See http://en.wikipedia.org/wiki/Volume\_of\_an\_n-ball .}
\begin{align}
   \text{Vol}(r) = \frac{\pi^{n/2}}{\Gamma\left( n/2 + 1 \right)} \: r^n
\end{align}
where $\Gamma(\cdot)$ is Euler's gamma function. To compute $k_n(A)$,
observe that we require
\begin{align}
   \frac{1}{k_n(A) (2\pi)^{n/2}} \int_A^{\infty} \int_{0}^{\pi} \cdots \int_{0}^{\pi} \int_{0}^{2\pi} e^{-(r-A)^2/2} \: dV
   =1
\end{align}
where
\begin{align}
   dV = r^{n-1} dr \left[ \prod_{i=1}^{n-2} \sin^{n-1-i}(\phi_i) \: d\phi_i \right] d\phi_{n-1}
\end{align}
is the spherical volume element in $n$ dimensions. We thus have
\begin{align}
   & k_n(A) = \frac{2}{2^{n/2} \, \Gamma\left(\frac{n}{2}\right)} \int_A^{\infty} e^{-(r-A)^2/2} \: r^{n-1} dr \nonumber\\
   &=\frac{2^{\frac{2-n}{2}}}{\Gamma\left(\frac{n}{2}\right)} \int_A^{\infty} e^{\frac{-(r-A)^2}{2}}
       \left(\sum_{i=0}^{n-1}\binom{n-1}{i}A^{n-1-i} (r-A)^i\right)
       dr \nonumber\\
&=\frac{2^{\frac{2-n}{2}}}{\Gamma\left(\frac{n}{2}\right)}\left[\sum_{i=0}^{n-1}\binom{n-1}{i}A^{n-1-i}\int_0^{\infty}r^ie^{-r^2/2}dr\right].\label{eq:42}
\end{align}
Using the standard integral 
$$\int_0^{\infty}x^ne^{-ax^2}dx=\frac{\Gamma\left(\frac{n+1}{2}\right)}{2a^{\frac{n+1}{2}}}$$
the expression for $k_n(A)$ in \eqref{eq:42} simplifies to
\begin{align}
  \label{eq:43}
  k_n(A)=\sum_{i=0}^{n-1}\binom{n-1}{i}\frac{\Gamma\left(\frac{n-i}{2}\right)}{2^{i/2}\,\Gamma\left(\frac{n}{2}\right)}\,A^i.
\end{align}
For example, for $n=1$ we have $k_1=1$, and for $n=2$ we have
$k_2=1+\sqrt{\pi/2} \, A$.

%-----------------------
\subsection{McKellips-type Bound}
 The divergence in \eqref{eq:upper-bound} can be written as
\begin{align} 
   & D\left( p_{\ul{Y}|\ul{X}}(\cdot | \ul{x}) \| q_{\ul{Y}}(\cdot) \right)
      = -\frac{n}{2}\log(2 \pi e) - \E{\log q_{\ul{Y}}(\ul{Y})} \nonumber \\
   & = \log \frac{\text{Vol(A)}}{(2\pi e)^{n/2} \beta} - \E{\log\left( q_{\ul{Y}}(\ul{Y}) \frac{\text{Vol}(A)}{\beta} \right)} .
    \label{eq:n-bounda}
\end{align}
The expectation in the above equation can be simplified and written as
\begin{align}
   & \frac{2}{2^{\frac{n-1}{2}}\Gamma\left(\frac{n-1}{2}\right)} \int_0^{\pi} \sin^{n-2}(\phi_1) \: e^{-(r_x^2/2) \sin^2 \phi_1} \nonumber \\
   & \quad \left\{ \int_A^{\infty} \psi\left( z-r_x  \cos \phi_{1} \right) \right. \nonumber \\
   & \quad \left. \left[ \log \frac{(1-\beta) \text{Vol}(A)}{(2\pi)^{n/2} \beta\, k_n(A) }
      - \frac{(r_y-A)^2}{2} \right] r_y^{n-1} \: dr_y \right\} d\phi_1. \label{eq:n-boundb}
\end{align}
For $n\ge2$, define the functions
\begin{align}
&\tI_n(x)=\frac{2}{2^{\frac{n-1}{2}}\Gamma\left(\frac{n-1}{2}\right)\sqrt{2\pi}}\int_0^{\pi}e^{x\cos\phi}\,(\sin\phi)^{n-2}\,d\phi \label{eq:26}\\
&Q_n(x,A)=\int_A^{\infty}e^{-(z^2+x^2)/2}\,\tI_n(zx)\,z^{n-1}dz \label{eq:24}\\
&g_n(x,A)=\int_A^{\infty}\frac{(z-A)^2}{2} e^{-(z^2+x^2)/2}\,\tI_n(zx)\,z^{n-1}dz  \label{eq:25}\\
&\tg_n(x,A)=\int_A^{\infty}\left(\frac{n}{2}-\frac{(z-A)^2}{2}\right) e^{-(z^2+x^2)/2}\,\tI_n(zx)\,z^{n-1}dz.\label{eq:46} 
\end{align}
In terms of the above functions, we can write
\begin{align}
 & D\left( p_{\ul{Y}|\ul{X}}(\cdot | \ul{x}) \| q_{\ul{Y}}(\cdot)
   \right) = \log \frac{\text{Vol(A)}}{(2\pi e)^{n/2} \beta}\nonumber\\
&+\left(\log\frac{(2\pi e)^{n/2}\beta\,k_n(A)}{(1-\beta)\tvol(A)}\right)Q_n(x,A)-\tg_n(x,A). \label{eq:45}    
\end{align}
In Appendix A, we show that $\tg_n(x,A)$ is positive. We
make the second $\log$ term in \eqref{eq:45} zero by choosing
\begin{align}
   \beta = \frac{\text{Vol}(A)}{\text{Vol}(A) + (2 \pi e)^{n/2}
  k_n(A)} %\Rightarrow 1-\beta = \frac{2 \pi e}{\pi A^2 + 2 \pi e}
\label{eq:51}
\end{align}
and we obtain the bound
$$C \le \log \left( k_n(A) + \frac{\text{Vol}(A)}{(2\pi e)^{n/2}} \right).$$
We combine this result with the capacity under the
(weaker) average power constraint $\E{\|X\|^2} \le A^2$. Observe
that the complex noise has power $n$ so the corresponding SNR
is $P=A^2/n$. We thus have the bound
\begin{align} \label{eq:C-bound-n-simple}
   C \le \min \left\{ \log \left( k_n(\sqrt{nP}) + \frac{\text{Vol}\left(\sqrt{nP}\right)}{(2\pi e)^{n/2}}  \right), \frac{n}{2} \log\left( 1+P \right) \right\}
\end{align}
where we measure the rate per $n$-dimensional symbol.

%-------------------------------
\subsection{Refined Bound}
The refinement is similar to the 2-dimensional case. We rewrite $D\left(
  p_{\ul{Y}|\ul{X}}(\cdot | \ul{x}) \| q_{\ul{Y}}(\cdot) \right)$ in
\eqref{eq:45} as follows:
\begin{align}
&  D\left( p_{\ul{Y}|\ul{X}}(\cdot | \ul{x}) \| q_{\ul{Y}}(\cdot)
  \right) = \log \frac{\text{Vol}(A)}{(2\pi e)^{n/2} \beta}\nonumber\\
& +\left(\log\frac{(2\pi)^{n/2}k_n(A)\beta}{(1-\beta)\tvol(A)}\right)Q_n(x,A)+g_n(x,A)  \label{eq:23}
\end{align}
where we have used the relationship
$\tg_n(x,A)=\frac{n}{2}Q_n(x,A)-g_n(x,A)$. As shown in Appendix B, the functions $Q_n(x,A)$ and $g_n(x,A)$ are
both increasing in $x$. Hence, for a positive
$\log\frac{(2\pi)^{n/2}k_n(A)\beta}{(1-\beta)\tvol(A)}$, the RHS of
\eqref{eq:23} is maximized at $x=A$, and we obtain the bound
\begin{align}
&  D\left( p_{\ul{Y}|\ul{X}}(\cdot | \ul{x}) \| q_{\ul{Y}}(\cdot)
  \right) \le \log \frac{\text{Vol(A)}}{(2\pi e)^{n/2} \beta}\nonumber\\
&
  +\left(\log\frac{(2\pi)^{n/2}\beta k_n(A)}{(1-\beta)\tvol(A)}\right)Q_n(A,A)+g_n(A,A) \label{eq:36}  
\end{align}
provided that
\begin{equation}
\beta\ge\frac{\tvol(A)}{(2\pi)^{n/2}k_n(A)+\tvol(A)}.\label{eq:37}
\end{equation}
Rewriting \eqref{eq:36} as
\begin{align}
&  D\left( p_{\ul{Y}|\ul{X}}(\cdot | \ul{x}) \| q_{\ul{Y}}(\cdot)
  \right) \le Q_n(A,A)\log \frac{k_n(A)}{e^{n/2}}\nonumber\\
&+(1-Q_n(A,A)) \log\frac {\tvol(A)}{(2\pi
  e)^{n/2}}+g_n(A,A)\nonumber\\
&  -(1-Q_n(A,A))\log\beta-Q_n(A,A)\log(1-\beta) \label{eq:40}
\end{align}
we see that $\beta=1-Q_n(A,A)$ minimizes the RHS of \eqref{eq:40}.
However, from \eqref{eq:37} this choice of $\beta$ is valid
only if
\begin{align}
  1-Q_n(A,A)\ge\frac{\tvol(A)}{(2\pi)^{n/2}k_n(A)+\tvol(A)} \label{eq:38}
\end{align}
which requires $A<A^*_n$, where $A^*_n$ is the smallest
positive value that results in equality in \eqref{eq:38}. The value
$A^*_n$ can be determined numerically. Choosing
$P=A^2/n$ and $P^*_n=(A^*_n)^2/n$, we obtain the bound
\begin{align}
C\le
  &(1-\beta_n(P))\log\left(k_n(\sqrt{nP})\right)+\beta_n(P)\log\frac{\text{Vol}\left(\sqrt{nP}\right)}{(2\pi
    e)^{n/2}}\nonumber\\
&+H_e(\beta_n(P))-\tg(\sqrt{nP},\sqrt{nP}),\quad\quad P\le P^*_n \label{eq:39}
\end{align}
where $\beta_n(P)=1-Q_n(\sqrt{nP},\sqrt{nP})$.

\subsection{Volume-based Lower Bound}
To obtain a lower bound, we use the volume-based method introduced in
\cite{JA2014}. The capacity of the $n$-dimensional AWGN channel
$\ul{Y}=\ul{X}+\ul{Z}$ with the peak constraint $|\ul{X}|\le A$ is
\begin{equation}
  \label{eq:10}
  C=\lim_{m\to\infty}\frac{1}{m}\sup_{p(\ul{x}^m)\in \cP^m_n}I(\ul{X}^m;\ul{Y}^m)
\end{equation}
where $\cP^m_n$ is the set of all valid distributions satisfying the
$n$-dimensional peak-power constraint. Now consider
\begin{align}
\nonumber  \frac{1}{m}I(\ul{X}^m;\ul{Y}^m)&=\frac{1}{m}(h(\ul{Y}^m)-h(\ul{Z}^m))\\
&=\frac{1}{m}h(\ul{Y}^m)-\frac{n}{2}\log(2\pi e).\label{eq:4}
\end{align}
Using the entropy-power inequality, we have
\begin{align}
\sup_{p(\ul{x}^m)\in \cP^m_n} e^{\frac{2}{mn}h(\ul{Y}^m)}&\ge
\sup_{p(\ul{x}^m)\in\cP^m_n}e^{\frac{2}{mn}h(\ul{X}^m)}+e^{\frac{2}{mn}h(\ul{Z}^m)}\nonumber\\
&\ge e^{\frac{2}{mn}\log(\tvol(A))^m}+e^{\log(2\pi e)}\nonumber\\
&\ge (\tvol_n(A))^{2/n}+2\pi e.
\end{align}
Therefore, we have
\begin{align}
\sup_{p(\ul{x}^m)\in \cP^m_n}\frac{1}{m}h(\ul{Y}^m)\ge\frac{n}{2}\log((\tvol(A))^{2/n}+2\pi e).
\label{eq:77star}
\end{align}
Using \eqref{eq:77star} in \eqref{eq:4}, we have
\begin{align}
  C&\ge \frac{n}{2}\log\left(1+\frac{\tvol(A)^{2/n}}{2\pi e}\right)\nonumber\\
&=\log\left(O(A^{n-2})+\frac{\tvol(A)}{(2\pi e)^{n/2}}\right).  \label{eq:11}
\end{align}

Comparing \eqref{eq:11} with the McKellips-type upper bound in
\eqref{eq:C-bound-n-simple}, we see that two bounds meet as SNR tends
to infinity, and they differ by $O(A^{n-1})$ inside the
logarithm. Thus, for moderate SNR the McKellips-type bound could be
improved. Comparing with the refined bound is not straight-forward,
and a numerical comparison for $n=2$ is shown in Fig. \ref{fig:2D}. 
The case $n=4$ is interesting because coherent optical
communication with two polarizations results in a 4-dimensional signal space. The bounds for
$n=4$ are plotted in Fig. \ref{fig:4d}.
As expected, the volume-based lower bound meets the
McKellips-type bound at high SNR. The analytical refined bound
is valid for SNR less than $P^*_4\approx7.92$ dB. Numerical
evaluation of $\min_{\beta}\max_x D\left( p_{\ul{Y}|\ul{X}}(\cdot | \ul{x}) \| q_{\ul{Y}}(\cdot)
  \right)$ is seen to be close to the lower bound for moderate
  SNR. For lower SNR, the volume-based lower bound is not expected
  to be tight. 

%%%%%%%%%%%%%
\begin{figure}[t!]
  \centering
    % This file was created by matlab2tikz v0.4.7 running on MATLAB 8.4.
% Copyright (c) 2008--2014, Nico Schlömer <nico.schloemer@gmail.com>
% All rights reserved.
% Minimal pgfplots version: 1.3
% 
% The latest updates can be retrieved from
%   http://www.mathworks.com/matlabcentral/fileexchange/22022-matlab2tikz
% where you can also make suggestions and rate matlab2tikz.
%
\begin{tikzpicture}

\begin{axis}[%
width=3in,
height=2.26829786480649in,
scale only axis,
xmin=-5,
xmax=30,
xlabel={SNR (dB)},
xmajorgrids,
ymin=0,
ymax=20,
ylabel={Capacity bounds},
ymajorgrids,
legend style={at={(0.03,0.97)},anchor=north west,draw=black,fill=white,legend cell align=left,font=\small}
]
\addplot [color=blue,solid,mark=o,mark size=2.5,mark options={solid},thick]
  table[row sep=crcr]{%
-5	2.5681055127746\\
-3	3.09012829424184\\
-1	3.682226855266\\
1	4.34217597970433\\
3	5.06563561053951\\
5	5.84694989941574\\
7	6.68000496840671\\
9	7.55893833655714\\
11	8.47860235830749\\
13	9.4347817421539\\
15	10.4242167150095\\
17	11.444494680545\\
19	12.4938647340464\\
21	13.5710172424894\\
23	14.674861699714\\
25	15.8043301757907\\
27	16.9582279592759\\
29	18.1351448962818\\
31	19.333430204287\\
33	20.5512223627165\\
35	21.7865172607328\\
};
\addlegendentry{McKellips};

\addplot [color=blue,dashed,thick]
  table[row sep=crcr]{%
-5	0.792818322326228\\
-3	1.1722078528907\\
-1	1.6868876504073\\
1	2.35127326938478\\
3	3.16536470982311\\
5	4.11474641721359\\
7	5.17562874712406\\
9	6.32160884782605\\
11	7.52878873408572\\
13	8.77811793472609\\
15	10.055615346701\\
17	11.3515598036098\\
19	12.6594249188838\\
21	13.9749269179118\\
23	15.2952943249808\\
25	16.6187504824256\\
27	17.9441630866419\\
29	19.2708131566433\\
31	20.5982452253145\\
33	21.926171184662\\
35	23.2544089604321\\
};
\addlegendentry{$\frac{n}{2}\log(1+P)$};

\addplot [color=blue,solid,mark=asterisk,mark size=2.5,mark options={solid},thick]
  table[row sep=crcr]{%
-5	0.883428522006152\\
-3	1.25193570445074\\
-1	1.74089777085116\\
1	2.36069003375364\\
3	3.10193719477868\\
5	3.94443017715178\\
7	4.86687240857007\\
%9	5.85071650764165\\
%11	6.88098816474661\\
%13	7.94604619111233\\
%15	9.03702470629931\\
%17	10.1472484329013\\
%19	11.2717234124136\\
%21	12.4067267443794\\
%23	13.5494887269631\\
%25	14.6979516076557\\
%27	15.8505884079355\\
%29	17.0062675702504\\
%31	18.1641520365707\\
%33	19.3236240318468\\
%35	20.484228980497\\
};
\addlegendentry{Refined};

\addplot [color=green,solid,mark=square,mark size=2.5,mark options={solid},thick]
  table[row sep=crcr]{%
%-5	0.883428522005537\\
%-3	1.25193570444887\\
%-1	1.74089777199123\\
%1	2.360690033759\\
%3	3.10193718631718\\
%5	3.94443017715024\\
7	4.86687241120768\\
9	5.85071650763746\\
11	6.88098816474191\\
13	7.94608760791583\\
15	9.04891453254959\\
17	10.1917341683648\\
19	11.3683591679596\\
21	12.5732732044529\\
23	13.8016531434087\\
25	15.0493515310067\\
27	16.312851985811\\
29	17.5892084430832\\
31	18.8759770685141\\
33	20.1711472654891\\
35	21.4730760374037\\
};
\addlegendentry{Numerical}

\addplot [color=red,solid,very thick]
  table[row sep=crcr]{%
-5	0.43947253822714\\
-3	0.668559227454527\\
-1	0.998048263179633\\
1	1.4536077843757\\
3	2.05438648258939\\
5	2.80676111368566\\
7	3.70198053556319\\
9	4.71936438727277\\
11	5.8328334925088\\
13	7.01699767050501\\
15	8.25065231564489\\
17	9.51775454377826\\
19	10.8069259745938\\
21	12.1104288612115\\
23	13.4231422327072\\
25	14.7417354665218\\
27	16.0640662959193\\
29	17.3887665016762\\
31	18.7149661334127\\
33	20.0421136174639\\
35	21.3698598656049\\
};
\addlegendentry{Lower bound}

\end{axis}
\end{tikzpicture}%
  \caption{Capacity bounds for 4-dimensional AWGN channels. The rate
    units are bits per 4 dimensions.}
  \label{fig:4d}
\end{figure}
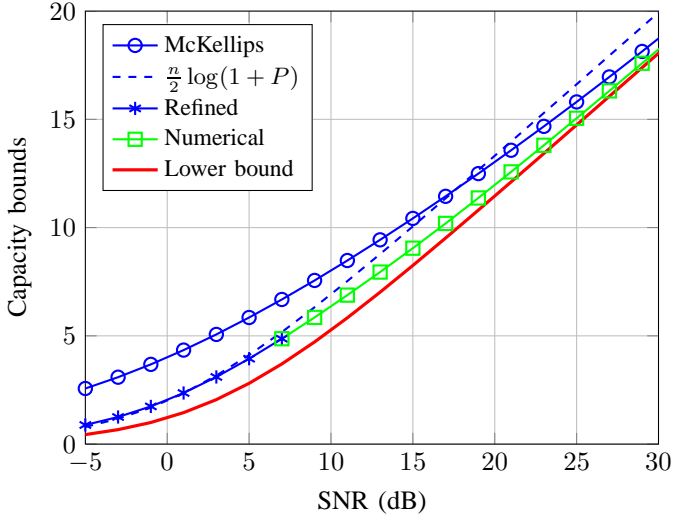
%%%%%%%%%%%%

%---------------
\subsection{Remarks}
Based on extensive numerical evaluations, we conjecture
that the expression for $D\left( p_{\ul{Y}|\ul{X}}(\cdot | \ul{x}) \| q_{\ul{Y}}(\cdot)
  \right)$ in \eqref{eq:23}, denoted as
\begin{align}
D_n(\beta,x)&\triangleq  \log \frac{\text{Vol}(A)}{(2\pi e)^{n/2} \beta}+g_n(x,A)\nonumber\\
& +\left(\log\frac{(2\pi)^{n/2}k_n(A)\beta}{(1-\beta)\tvol(A)}\right)Q_n(x,A)  \label{eq:41}
\end{align}
is maximized over $x\in[0,A]$ (for a fixed $\beta$) at the
endpoints $x=0$ or $x=A$. Under this conjecture, we have
\begin{equation}
  \label{eq:47}
  \min_{\beta}\max_xD_n(\beta,x)=\min_{\beta}\max(D_n(\beta,0),D_n(\beta,A)).
\end{equation}
Now, writing $D_n(\beta,x)$ as
\begin{align}
  D_n(\beta,x)&=-(1-Q_n(x,A))\log\beta-Q_n(x,A)\log(1-\beta) \nonumber \\
&+\text{ terms independent of }\beta  \label{eq:48}
\end{align}
we see that, for a fixed $x$ and $\beta\in[0,1]$, $D_n(\beta,x)$
achieves a minimum
at $\hat{\beta}_n(x)=1-Q_n(x,A)$, decreases with $\beta$ for
$\beta\in[0,\hat{\beta}_n(x)]$, and increases for $\beta\in[\hat{\beta}_n(x),1]$. Let
$\beta^*_n(A)$ be the value of $\beta$ for which $D_n(\beta,0)=D_n(\beta,A)$.

Using the above, the minmax in \eqref{eq:47} evaluates to
\begin{align}
  \min\{D_n(&\beta^*_n(A),A),\max(D_n(\hat{\beta}_n(0),0),D_n(\hat{\beta}_n(0),A)),\nonumber\\
&\max(D_n(\hat{\beta}_n(A),0),D_n(\hat{\beta}_n(A),A))\}. \label{eq:49}
\end{align}
To obtain an expression for $\beta^*_n(A)$, we
simplify $D_n(\beta,0)=D_n(\beta,A)$ resulting in
\begin{align}
  \label{eq:50}
\beta^*_n(A)=\frac{\tvol(A)}{\tvol(A)+(2\pi)^{n/2}e^{c_n(A)}k_n(A)}
\end{align}
where
$c_n(A)=\dfrac{g_n(A,A)-g_n(0,A)}{Q_n(0,A)-Q_n(A,A)}$. Interestingly,
as $A\to\infty$, $c_n(A)\to-1/2$, and we see that $\beta^*_n(A)$ tends
to the McKellips-type expression in \eqref{eq:51}.

Finally, we remark that the best upper bounds are as follows: 
\begin{enumerate}
\item at low SNR: the average-power capacity $\frac{n}{2}\log(1+P)$;
\item at moderate SNR: the refined upper bound which evaluates to
  $D_n(\hat{\beta}_n(A),A)$;
\item at high SNR: $D_n(\beta^*_n(A),A)$ which tends to the
  McKellips-type bound.
\end{enumerate}
The exact range of low, moderate, and high SNR depends on the
dimension $n$.

%===============================
\section*{Acknowledgment}
G. Kramer and G. B\"ocherer were supported by the German Federal Ministry of Education and
Research in the framework of an Alexander von Humboldt Professorship.

%-----------
\bibliographystyle{IEEEtran} %{plain} {usrt}
\bibliography{muIT-Dec15,peakpower} %{ITgeneral,mu}

% Generated by IEEEtran.bst, version: 1.12 (2007/01/11)
\begin{thebibliography}{10}
\providecommand{\url}[1]{#1}
\csname url@samestyle\endcsname
\providecommand{\newblock}{\relax}
\providecommand{\bibinfo}[2]{#2}
\providecommand{\BIBentrySTDinterwordspacing}{\spaceskip=0pt\relax}
\providecommand{\BIBentryALTinterwordstretchfactor}{4}
\providecommand{\BIBentryALTinterwordspacing}{\spaceskip=\fontdimen2\font plus
\BIBentryALTinterwordstretchfactor\fontdimen3\font minus
  \fontdimen4\font\relax}
\providecommand{\BIBforeignlanguage}[2]{{%
\expandafter\ifx\csname l@#1\endcsname\relax
\typeout{** WARNING: IEEEtran.bst: No hyphenation pattern has been}%
\typeout{** loaded for the language `#1'. Using the pattern for}%
\typeout{** the default language instead.}%
\else
\language=\csname l@#1\endcsname
\fi
#2}}
\providecommand{\BIBdecl}{\relax}
\BIBdecl

\bibitem{Smith1971}
J.~G. Smith, ``The information capacity of amplitude- and variance-constrained
  scalar gaussian channels,'' \emph{Inf. and Control}, vol.~18, no.~3, pp. 203
  -- 219, 1971.

\bibitem{Shamai95}
S.~Shamai and I.~Bar-David, ``The capacity of average and peak-power-limited
  quadrature gaussian channels,'' \emph{IEEE Trans. Inf. Theory}, vol.~41,
  no.~4, pp. 1060--1071, Jul 1995.

\bibitem{Tchamkerten04}
A.~Tchamkerten, ``On the discreteness of capacity-achieving distributions,''
  \emph{IEEE Trans. Inf. Theory}, vol.~50, no.~11, pp. 2773--2778, Nov 2004.

\bibitem{Chan05}
T.~Chan, S.~Hranilovic, and F.~Kschischang, ``Capacity-achieving probability
  measure for conditionally gaussian channels with bounded inputs,'' \emph{IEEE
  Trans. Inf. Theory}, vol.~51, no.~6, pp. 2073--2088, June 2005.

\bibitem{Sharma08}
N.~Sharma and S.~Shamai, ``Characterizing the discrete capacity achieving
  distribution with peak power constraint at the transition points,'' in
  \emph{IEEE Int. Symp. Inf. Theory and Its Appl.}, Dec 2008, pp. 1--6.

\bibitem{PolyanskiyWu14}
Y.~Polyanskiy and Y.~Wu, ``Peak-to-average power ratio of good codes for
  gaussian channel,'' \emph{IEEE Trans. Inf. Theory}, vol.~60, no.~12, pp.
  7655--7660, Dec 2014.

\bibitem{McKellips:04isit}
A.~McKellips, ``Simple tight bounds on capacity for the peak-limited
  discrete-time channel,'' in \emph{IEEE Int. Symp. Inf. Theory}, 2004, pp.
  348--348.

\bibitem{Csiszar11}
{I. Csisz\'ar and J. K\"orner}, \emph{Information Theory: Coding Theorems for
  Discrete Memoryless Systems}.\hskip 1em plus 0.5em minus 0.4em\relax
  Cambridge University Press, 2011.

\bibitem{Lapidoth:03}
A.~Lapidoth and S.~M. Moser, ``Capacity bounds via duality with applications to
  multiple-antenna systems on flat fading channels,'' \emph{IEEE Trans.\ Inf.\
  Theory}, vol.~49, no.~10, pp. 2426--2467, Oct. 2003.

\bibitem{LMW2009}
A.~Lapidoth, S.~Moser, and M.~Wigger, ``On the capacity of free-space optical
  intensity channels,'' \emph{IEEE Trans. Inf. Theory}, vol.~55, no.~10, pp.
  4449--4461, Oct 2009.

\bibitem{Cormen:2001:IA:580470}
T.~H. Cormen, C.~Stein, R.~L. Rivest, and C.~E. Leiserson, \emph{Introduction
  to {A}lgorithms}, 2nd~ed.\hskip 1em plus 0.5em minus 0.4em\relax McGraw-Hill
  Higher Education, 2001.

\bibitem{OzarowWyner90}
L.~H. Ozarow and A.~D. Wyner, ``On the capacity of the {G}aussian channel with
  a finite number of input levels,'' \emph{IEEE Trans.\ Inf.\ Theory}, vol.~36,
  no.~6, pp. 1426--1428, Nov. 1990.

\bibitem{JA2014}
V.~Jog and V.~Anantharam, ``An energy harvesting {AWGN} channel with a finite
  battery,'' in \emph{IEEE Int. Symp. Inf. Theory}, June 2014, pp. 806--810.

\end{thebibliography}

%========================
\section*{Appendix A}
We show that the function $\tg_n(x,A)$ in \eqref{eq:46} is
positive. First, we rewrite $\tg_n(x,A)$ as
\begin{align}
&\tg_n(x,A)= \frac{1}{2^{\frac{n-1}{2}}\Gamma\left(\frac{n-1}{2}\right)} \int_0^{\pi} \sin^{n-2}(\phi) \: e^{-(x^2/2) \sin^2 \phi} \nonumber \\
   & \quad \left\{ \int_A^{\infty} \left[ n - (z-A)^2 \right] \psi\left( z-x  \cos \phi \right)  z^{n-1} \: dz \right\} d\phi. \label{eq:29}
\end{align}
Now it suffices to show that the integral
over $z$ above is positive. We set
$\tilde{z}=z-x\cos\phi$, $u=A-x\cos\phi$ and simplify the integral as
\begin{align}
   \int_u^{\infty} \left[ n - (\tilde{z}-u)^2 \right] (\tilde{z}-u+A)^{n-1} \psi\left(\tilde{z}\right) \: d\tilde{z} .
   \label{eq:n-boundc}
\end{align}
For example, for $n=2$ we recover \eqref{eq:D-bound2c}. Now
\eqref{eq:n-boundc} can be rewritten as
\begin{equation}
  \label{eq:3}
  \sum_{i=0}^{n-1}\binom{n-1}{i}A^{n-1-i}\int_u^{\infty}[n(\tilde{z}-u)^i-(\tilde{z}-u)^{i+2}]\psi(\tilde{z})d\tilde{z}.
\end{equation}
We claim that the integral inside the summation above is positive. In
fact, we prove the following stronger inequality for $u\ge 0$ and a non-negative integer $i$:
\begin{equation}
  \label{eq:1}
  \int_u^{\infty}[(i+1)(z-u)^i-(z-u)^{i+2}]\psi(z)dz\ge0.
\end{equation}
Since $n\ge i+1$ in \eqref{eq:3}, the inequality \eqref{eq:1} implies that the integral in
\eqref{eq:3} is positive.
A proof of \eqref{eq:1} is as follows:
\begin{align*}
&\int_u^{\infty}(z-u)^i\psi(z)dz=\int_u^{\infty}(z-u)^i\psi(z)(z-u)'dz\\
&\overset{(a)}{=}\left.(z-u)^{i+1}\psi(z)\right|_u^{\infty}-\int_u^{\infty}(z-u)[(z-u)^i(-z\psi(z))\\
&\phantom{=\int_u^{\infty}(z-u)^{i+1}(z-u+u)}+i(z-u)^{i-1}\psi(z)]dz\\
%&=\int_u^{\infty}(z-u)^{i+1}(z-u+u)\psi(z)dz\\
%&\phantom{=\int_u^{\infty}(z-u)^{i+1}(z-u+u)}-i\int_u^{\infty}(z-u)^i\psi(z)dz\\
&=\int_u^{\infty}(z-u)^{i+2}\psi(z)dz+u\int_u^{\infty}(z-u)^{i+1}\psi(z)dz\\
&\phantom{=\int_u^{\infty}(z-u)^{i+1}(z-u+u)}-i\int_u^{\infty}(z-u)^i\psi(z)dz
\end{align*}
where $(a)$ uses integration by parts. The above simplifies to
\eqref{eq:1} because $u\int_u^{\infty}(z-u)^{i+1}\psi(z)dz\ge0$.

%========================

\section*{Appendix B}
We use the double factorial notation
\begin{align*}
  n!!=\begin{cases}
2 \cdot 4 \cdots n, &n:\text{even},\\
1 \cdot 3 \cdots n, &n:\text{odd}.
\end{cases}
\end{align*}
Using $e^{x\cos\phi}=\sum_{m=0}^{\infty}\frac{(x\cos\phi)^m}{m!}$ in
the integral of \eqref{eq:26}, we get the Taylor
series expansion for $\tI_n(x)$, $n\ge2$, as 
\begin{equation}
  \label{eq:27}
  \tI_n(x)=\frac{2^{(2-n)/2}}{\Gamma\left(\frac{n-1}{2}\right)\sqrt{\pi}}\sum_{m=0}^{\infty}a_{n-2,m}\frac{x^{m}}{m!}
\end{equation}
where $a_{n,m}=\int_{0}^{\pi}\sin^n\phi\cos^m\phi\,d\phi$. For
odd $m$, it is easy to see that $a_{n,m}=0$. For even $m$, 
we have
\begin{equation}
a_{n,m}=\begin{cases}
\dfrac{\pi\,(m-1)!!\,(n-1)!!}{(n+m)!!}, & n:\text{even},\\
\dfrac{2\,(m-1)!!\,(n-1)!!}{(n+m)!!}, & n:\text{odd}.
\end{cases}   \label{eq:28}
\end{equation}
Using the above $a_{m,n}$ in \eqref{eq:27} and simplifying, we have
\begin{align}
  \label{eq:31}
\tI_n(x)= d_n \sum_{k=0}^{\infty}\frac{(2k-1)!!}{(n+2k-2)!!}\frac{x^{2k}}{(2k)!}
\end{align}
with a suitably-defined $d_n$.

We show that the derivatives of the functions $Q_n(x,A)$ and
$g_n(x,A)$ in \eqref{eq:24} and \eqref{eq:25} with respect to $x$ are
non-negative. First, the derivative of $e^{-x^2/2}\tI_n(zx)$ with respect to
$x$ in Taylor series form is seen to be
\begin{align}
&  \frac{d}{dx}(e^{-x^2/2}\tI_n(zx)) \nonumber \\
&=d_ne^{-x^2/2}\sum_{k=1}^{\infty}\frac{(2k-3)!!}{(n+2k-2)!!}\frac{z^{2k}x^{2k-1}}{(2k-2)!}\nonumber\\
& \quad -d_ne^{-x^2/2}\sum_{k=0}^{\infty}\frac{(2k-1)!!}{(n+2k-2)!!}\frac{z^{2k}x^{2k+1}}{(2k)!}\nonumber\\
&=  d_ne^{-x^2/2}\sum_{k=0}^{\infty}\left(\frac{z^{2k+2}}{n+2k}-z^{2k}\right) \frac{(2k-1)!!}{(n+2k-2)!!}\frac{x^{2k+1}}{(2k)!}. \label{eq:30}
\end{align}
Using \eqref{eq:30} in the definition of $Q_n(x,A)$, we see that 
\begin{align}
&\frac{dQ_n(x,A)}{dx} \nonumber \\
&=\sum_{k=0}^{\infty}t_{n,k}(x)\int_A^{\infty}\left(\frac{z^{2k+2}}{n+2k}-z^{2k}\right)z^{n-1}e^{-z^2/2}dz \label{eq:34}
\end{align}
where $t_{n,k}(x)\triangleq
d_ne^{-x^2/2}\frac{(2k-1)!!}{(n+2k-2)!!}\frac{x^{2k+1}}{(2k)!}$. We
now show that the integral in \eqref{eq:30} is non-negative,
which implies that the derivative of $Q_n(x,A)$ with respect
to $x$ is non-negative. We have
\begin{align}
&\int_A^{\infty}(z^{2k})z^{n-1}e^{-z^2/2}dz=\int_A^{\infty}e^{-z^2/2}\left(\frac{z^{n+2k}}{n+2k}\right)'dz\nonumber\\
&=\frac{-A^{n+2k}e^{-A^2/2}}{n+2k}+\int_A^{\infty}\left(\frac{z^{2k+2}}{n+2k}\right)z^{n-1}e^{-z^2/2}dz \label{eq:35}
\end{align}
where we used integration by parts in the last step. Rearranging
the above equation, since $\frac{A^{n+2k}e^{-A^2/2}}{n+2k}>0$, we see that the integral in \eqref{eq:34} is
non-negative.

For $g_n(x,A)$, a similar approach is used. Using \eqref{eq:30} in the definition of $g_n(x,A)$, we see that 
\begin{align}
 &\frac{dg_n(x,A)}{dx}=\nonumber\\
&\sum_{k=0}^{\infty}t_{n,k}(x)\int_A^{\infty}\left(\frac{z^{2k+2}}{n+2k}-z^{2k}\right)\frac{(z-A)^2}{2}z^{n-1}e^{-z^2/2}dz.\label{eq:32}
\end{align}
A proof for the non-negativity of the integral in \eqref{eq:32} is
\begin{align}
\int_A^{\infty}(z^{2k})&\frac{(z-A)^2}{2}z^{n-1}e^{-z^2/2}dz\nonumber\\
&=\int_A^{\infty}\frac{(z-A)^2}{2}e^{-z^2/2}\left(\frac{z^{n+2k}}{n+2k}\right)'dz\nonumber\\
&=\int_A^{\infty}\left(\frac{z^{2k+2}}{n+2k}\right)\frac{(z-A)^2}{2}z^{n-1}e^{-z^2/2}dz\nonumber\\
&\quad -\int_A^{\infty}\left(\frac{z^{n+2k}}{n+2k}\right)(z-A)e^{-z^2/2}dz \label{eq:33}  
\end{align}
where we used integration by parts in the last step. Rearranging
the above equation, since $\left(\frac{z^{n+2k}}{n+2k}\right)(z-A)e^{-z^2/2}>0$
for $z>A$, we see that the integral in \eqref{eq:32} is
non-negative.
%========================
\section*{Appendix C}
Since $X$ is uniform in $A_N$ and $|A_N|=N^2$, we have
\begin{align}
P_N&=E[||X||^2]=\frac{1}{N^2}\sum_{n=1}^{N-1}\sum_{l=0}^{2n}(n+0.5)^2\Delta^2\nonumber\\
&=\frac{\Delta^2}{N^2}\sum_{n=1}^{N-1}(2n+1)\left(n^2+n+\frac{1}{4}\right)\label{eq:61}\\
&=\frac{\Delta^2}{2}\left(N^2-\frac{1}{2}\left(1+\frac{1}{N^2}\right)\right).\label{eq:62}
\end{align}
To evaluate $\rho_N=E[X^*U]/P_N$, we consider
$$E[X^*(X+U)]=E[E[X^*(X+U)|X]].$$ 
Now $X^*(X+U)|X=0$ is zero
with probability 1, and $(X+U)|X=(n+0.5)\Delta e^{j(l+0.5)\theta_n}$
is uniform in the region specified in \eqref{eq:53}. So
$X^*(X+U)|X=(n+0.5)\Delta e^{j(l+0.5)\theta_n}$ is uniform in the
region
\begin{align}
\{(r\cos\theta,r\sin\theta): n(n+0.5)\le&\frac{r}{\Delta^2}\le(n+1)(n+0.5),\nonumber\\ -0.5\le&\frac{\theta}{\theta_n}\le0.5\}.\label{eq:63}
\end{align}
A calculation shows that
\begin{align}
  E[X^*(X+U)|&X=(n+0.5)\Delta e^{j(l+0.5)\theta_n}]\nonumber\\
&=\Delta^2\left(n^2+n+\frac{1}{3}\right)\sinc\frac{\pi}{2n+1} \label{eq:64}
\end{align}
where $\sinc\ x = \frac{\sin x}{x}$. Therefore, we have
\begin{align}
E&[X^*(X+U)] \nonumber \\
&=\frac{1}{N^2}\sum_{n=1}^{N-1}\sum_{l=0}^{2n}\Delta^2\left(n^2+n+\frac{1}{3}\right)\sinc\frac{\pi}{2n+1}\nonumber\\
&=\frac{\Delta^2}{N^2}\sum_{n=1}^{N-1}(2n+1)\left(n^2+n+\frac{1}{3}\right)\sinc\frac{\pi}{2n+1}.  \label{eq:65}
\end{align}
Since $E[X^*(X+U)]=E[||X||^2]+E[X^*U]$, we have
\begin{align}
\rho_NP_N=\frac{\Delta^2}{N^2}\sum_{n=1}^{N-1}(2n+1)&\left[\left(n^2+n+\frac{1}{3}\right)\sinc\frac{\pi}{2n+1}\right.\nonumber\\
&\left.-\left(n^2+n+\frac{1}{4}\right)\right].\label{eq:66}
\end{align}
The sequence $a_n=(n^2+n+1/4)-(n^2+n+1/3)\sinc(\pi/(2n+1))$ is
increasing and converges to $\frac{\pi^2-2}{24}\le0.33$ with
$a_1\ge0.32$. We thus have
\begin{align}
  \label{eq:67}
  -0.33\left(1-\frac{1}{N^2}\right)\le\frac{\rho_NP_N}{\Delta^2}\le-0.32\left(1-\frac{1}{N^2}\right).
\end{align}
From \eqref{eq:62}, using $P_N=\frac{\Delta^2N^2}{2}(1-O(1/N^2))$, we have
\begin{align}
  \label{eq:68}
  -0.66\left(1-O\left(1/N^2\right)\right)\le\rho_NN^2\le-0.64\left(1-O\left(1/N^2\right)\right).  
\end{align}
The expression inside the second $\log$ term in \eqref{eq:59} is
$$\frac{N^2}{\alpha}-\frac{P^2_N(1+\rho_N)^2}{N^2\Delta^2}.$$
Since $P_N=\frac{\Delta^2N^2}{2}(1-0.5/N^2-O(1/N^4))$, we have
\begin{align*}
  \frac{P^2_N}{N^2\Delta^2}=P_N\frac{P_N}{\Delta^2N^2}=\left(\frac{N^2}{\alpha}-1\right)\left(1-\frac{0.5}{N^2}-O(1/N^4)\right)
\end{align*}
where we have also used $N^2=\alpha(1+P_N/2)$. We arrive at
\begin{align*}
  \frac{N^2}{\alpha}&-\frac{P^2_N(1+\rho_N)^2}{N^2\Delta^2}\\
&=\frac{N^2}{\alpha}-\left(\frac{N^2}{\alpha}-1\right)(1-0.5/N^2-O(1/N^4))(1+\rho_N)^2\\
&\ge 1+\frac{1.82}{\alpha}+O(1/N^2)
\end{align*}
using $\rho_NN^2\ge -0.66(1-O(1/N^2))$. This explains the final
analytical lower bound given in \eqref{eq:60}.
\end{document}